\definecolor{since}{rgb}{0.5,0.5,0.5}
\definecolor{newred}{HTML}{ED2024}
\definecolor{newgreen}{HTML}{109A48}
\definecolor{newblue}{HTML}{535DAA}
\definecolor{neworange}{HTML}{F79420}
\renewcommand*\env@matrix[1][*\c@MaxMatrixCols c]{%
  \hskip -\arraycolsep
  \let\@ifnextchar\new@ifnextchar
  \array{#1}}
\setlist{itemsep=0mm}
\newclass{\QPCP}{QPCP}
\newclass{\QCPCP}{QCPCP}
\newclass{\QCMAcomp}{QCMA-complete}
\newclass{\sharpP}{\#P}
\newtheorem{theorem}{Theorem}
\newtheorem*{theorem*}{Theorem}
\newtheorem*{proposition*}{Proposition}
\newtheorem{fact}[theorem]{Fact}
\newtheorem*{fact*}{Fact}
\newtheorem{lemma}[theorem]{Lemma}
\newtheorem*{lemma*}{Lemma}
\newtheorem{corollary}[theorem]{Corollary}
\newtheorem*{conjecture*}{Conjecture}
\theoremstyle{definition}
\newtheorem{definition}[theorem]{Definition}
\newtheorem*{definition*}{Definition}
\theoremstyle{remark}
\newtheorem*{remark*}{Remark}
\DeclareMathOperator\eye{\mathbb{I}}
\newcommand\restr[2]{{
  \left.\kern-\nulldelimiterspace 
  #1 
  \right|_{#2} 
  }}
\newcommand{\mc}{\mathcal}
\renewcommand{\E}{\mathop{\mathbb E\/}}
\title{Polynomial-Time Classical Simulation of Noisy Quantum Circuits with Naturally Fault-Tolerant Gates}
\author[1,2]{\href{}{Jon~Nelson*}}
\author[1,2]{\href{https://orcid.org/0000-0001-6365-8238}{Joel~Rajakumar*}}
\author[3,1]{\href{}{Dominik Hangleiter}}
\author[1,2]{\href{}{Michael J. Gullans}}
\affil[1]{\normalsize  Joint Center for Quantum Information \& Computer Science, University of Maryland and NIST}
\affil[2]{\normalsize Department of Computer Science,
	University of Maryland}
\affil[3]{\normalsize Simons Institute for the Theory of Computing, University of California at Berkeley}
\affil[4]{\normalsize National Institute of Standards and Technology (NIST)}
\begin{document}

\maketitle
\def\thefootnote{*}\footnotetext{These authors contributed equally to this work}\def\thefootnote{\arabic{footnote}}
\begin{abstract}
    We construct a polynomial-time classical algorithm that samples from the output distribution of noisy geometrically local Clifford circuits with any product-state input and single-qubit measurements in any basis. Our results apply to circuits with nearest-neighbor gates on an $O(1)$-D architecture with depolarizing noise after each gate. Importantly, we assume that the circuit does not contain qubit resets or mid-circuit measurements. This class of circuits includes Clifford-magic circuits and Conjugated-Clifford circuits, which are important candidates for demonstrating quantum advantage using non-universal gates. Additionally, our results can be extended to the case of IQP circuits augmented with CNOT gates, which is another class of non-universal circuits that are relevant to current experiments. Importantly, these results do not require randomness assumptions over the circuit families considered (such as anticoncentration properties) and instead hold for \textit{every} circuit in each class as long as the depth is above a constant threshold. This allows us to rule out the possibility of fault-tolerance in these circuit models. 
    As a key technical step, we prove that interspersed noise causes a decay of long-range entanglement at depths beyond a critical threshold. To prove our results, we merge techniques from percolation theory and Pauli path analysis. 
\end{abstract}
\section{Introduction}
A first step towards understanding the power of quantum computation is to determine the conditions under which it is possible to perform a quantum computation that cannot be classically simulated or, in other words, to demonstrate ``quantum advantage''. 
Although there is robust theoretical evidence that this is true for large-scale fault-tolerant quantum computers, this becomes a subtle question when restricted to near-term quantum hardware. 
These devices are noisy and may lack capabilities that are required for fault tolerance such as the ability to perform intermediate measurements or reset qubits during a computation. 

To address these questions, a line of research has considered the task of sampling from the output distribution of (random) quantum circuits as a way to demonstrate quantum advantage \cite{terhal_adaptive_2004}. 
This is relevant to near-term hardware because even very restricted models of quantum computation can be used to demonstrate such quantum advantages, and the advantage is not contrived in that it tolerates a constant amount of total error \cite{bremner_classical_2010,bremner_average-case_2016,aaronson_computational_2013,gao_quantum_2017,hamilton_gaussian_2017}.
Adding realistic noise with constant rate complicates this picture, since the fidelity with which the targeted task can be performed on a noisy quantum device diminishes exponentially with the system size.
It is therefore important to study the computational power of quantum computations under realistic noise assumptions~\cite{aharonov1996limitations,fujii_computational_2016,bremner_achieving_2017,aharonov_polynomial-time_2023,chen2022complexitynisq}. 
Obtaining a precise understanding of the computational power of noisy restricted models of computation can elucidate the ingredients that are necessary for near-term quantum advantage.

In this work, we consider quantum circuits with gate sets that are naturally fault-tolerant. 
These include Clifford circuits with state preparation and measurement in arbitrary product bases, and IQP circuits augmented with CNOT gates. In particular, this allows for perfect magic state inputs in the case of Clifford circuits and arbitrary diagonal non-Clifford gates in the case of IQP+CNOT circuits.  
The circuit classes we consider do not allow for intermediate measurement and classical feed-forward 
and are therefore not universal for quantum computation; however, they constitute circuits which are readily implementable in early-stage fault-tolerant quantum devices, which can execute non-universal gate sets transversally \cite{bluvstein_logical_2024,eastin_restrictions_2009}, making them good candidates for demonstration of quantum advantage.
In particular, these circuit classes include Clifford-magic, conjugated Clifford circuits and hypercube block IQP circuits, which have all been proven to be hard to approximately sample from in the nearly noiseless case assuming certain complexity-theoretic conjectures \cite{yoganathan_quantum_2019, bouland,hangleiter_fault-tolerant_2024,pashayan_estimation_2020}. 
Thus, it is important to understand the noise and depth regimes in which these circuits maintain their computational power.

We develop a classical algorithm for sampling from the output of these circuit families when they are subject to mid-circuit depolarizing noise. Applied to geometrically local circuits, our algorithm is efficient at circuit depths above a \emph{constant threshold} that depends on the noise rate and is efficient for \textit{all} circuits in this class. Our results demonstrate that in order to achieve scalable quantum advantage with realistically noisy circuits, an additional ingredient is required such as a universal set of mid-circuit gates, intermediate measurements, or a supply of fresh qubits. 

\subsection{Overview of Results}

We summarize our main result informally in the following theorem. 

\begin{theorem}[Informal]
    There exists an efficient randomized classical algorithm that approximately samples from the output distribution of a noisy quantum circuit $C$ with circuit-level local noise rate $\gamma$ in the following cases,
    \begin{enumerate}
        \item $C$ is a geometrically local Clifford-Magic, Conjugated Clifford, or IQP+CNOT circuit and $d \ge \Omega(\gamma^{-1} \log \gamma^{-1})$
        \item $C$ is a Clifford-Magic, Conjugated Clifford, or IQP+CNOT circuit and $d \ge \Omega(\gamma^{-1}\log n)$ .
        \end{enumerate}
\end{theorem}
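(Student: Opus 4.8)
The plan is to combine two independent decay mechanisms --- the spread of noise through the circuit's lightcone structure (percolation) and the attenuation of Fourier/Pauli weight by depolarizing noise (Pauli path analysis) --- into a single sampling algorithm. First I would set up the Pauli path (Fourier) representation of the noisy output distribution: writing the noisy channel as a mixture over which qubits are ``reset'' by depolarizing noise, the probability of a measurement outcome becomes a sum over Pauli paths through the Heisenberg-evolved observable, where each path picks up a multiplicative factor $(1-\gamma)$ or similar for every noise location it must survive. For Clifford circuits the Heisenberg evolution of a Pauli is again a single Pauli, so the path sum collapses dramatically --- this is exactly where the Clifford (and IQP+CNOT, Clifford-magic, conjugated-Clifford) structure is used, and it is why no anticoncentration assumption is needed: we are not averaging over circuits but exploiting that each individual circuit only propagates Paulis to Paulis.

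Next I would introduce the percolation argument to control the \emph{support} of the surviving Paulis. Think of the circuit as a brickwork lattice; place an independent ``erasure'' at each (qubit, layer) site with probability tied to $\gamma$. Conditioned on the noise pattern, a Pauli operator supported on the final measured qubit can only propagate backward through connected (non-erased) regions of the spacetime lattice. When $d \ge \Omega(\gamma^{-1}\log\gamma^{-1})$ in the geometrically local case, the effective erasure probability per unit depth is large enough to be in the subcritical regime of (oriented/site) percolation on the relevant lattice, so all connected clusters have size $O(\mathrm{polylog})$ with high probability; in the all-to-all case the analogous statement requires $d \ge \Omega(\gamma^{-1}\log n)$ so that a union bound over the $n$ qubits still leaves every backward cluster small. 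I would cite the standard exponential-tail bounds for subcritical percolation cluster sizes and the Peierls-type contour counting to get the threshold constant in terms of $\gamma$.

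With both ingredients in place, the algorithm is: sample a noise/erasure pattern; in each connected spacetime cluster (which is small with high probability) compute the reduced state exactly by brute force since it involves only $O(\mathrm{polylog}\, n)$ qubits and gates; outside the clusters the state has been reset to a product form by noise and factorizes, so each single-qubit measurement outcome can be sampled independently. I would argue correctness by bounding the total variation distance: the contributions of ``bad'' noise patterns (those producing a large cluster) are exponentially small by the percolation tail bound, and on ``good'' patterns the simulation is exact. A subtlety is that the two decay effects must be quantified in a compatible way --- the Pauli-weight decay gives a clean bound on the $\ell_1$ (or $\ell_2$) mass of high-weight paths, while percolation controls geometric locality --- so I would likely route everything through the Pauli path picture and use percolation only to bound which paths have nonzero amplitude after conditioning on noise, i.e. treat the noise pattern as the random object and the Pauli path as deterministic given it.

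The main obstacle I expect is getting the threshold to be a genuine \emph{constant} (independent of $n$) in the geometrically local case rather than $\Omega(\gamma^{-1}\log n)$: this requires that the percolation argument be purely local, with no union bound over the $n$ output qubits, which in turn forces the analysis to be done ``from the measured qubit backward'' one cluster at a time and to show that a single backward cluster is subcritical --- and then to argue that the global TV error is still controlled even though there are $n$ such clusters, presumably because each contributes error exponentially small in $d$ and one can afford a factor of $n$. Making the Pauli path bookkeeping interact cleanly with the conditioning on a random noise pattern (so that the ``survival factor'' of a path and the ``cluster containing the path'' are consistently defined) is the second delicate point, and I would handle it by fixing the noise pattern first, then enumerating only the finitely many Paulis consistent with it.
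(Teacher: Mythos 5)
Your high-level plan --- sample the noise pattern stochastically, exploit the fact that Clifford circuits map Paulis to single Paulis so no anticoncentration is needed, combine a percolation-type locality argument with a Pauli-path decay argument, and simulate small clusters exactly while sampling the rest uniformly --- is indeed the architecture of the paper's proof. But the specific percolation model you propose does not quite work for Clifford circuits, and you are missing the two technical devices the paper uses to make the combination go through.

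The first gap is in the percolation setup. You propose site percolation directly on the spacetime lattice of $(\text{qubit},\text{layer})$ locations, so that a Pauli propagating backward is ``cut'' whenever it hits an erased site. The difficulty is that whether a given spacetime site kills a given Pauli is circuit-dependent: the backward-propagated Pauli may or may not have a non-identity there, so the clusters you are trying to bound are not clusters in a fixed random graph but depend jointly on the circuit and on which observable you started from. The paper avoids this by propagating every sampled error \emph{forward to the input} (yielding a group $\langle M_b\rangle$ of Paulis acting before the circuit), defining an input qubit $i$ as depolarized exactly when both $X_i,Z_i\in\langle M_b\rangle$, and then coarse-graining the input lattice into blocks of side $2d$. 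Percolation is carried out on the coarse-grained input graph of blocks containing a non-depolarized qubit, which \emph{is} a well-defined random subgraph given $b$, and whose clusters control exactly which final qubits need joint simulation (via forward lightcones). Your route would need an analogous device to turn ``Pauli survives the noise pattern'' into a graph-theoretic event before any subcritical-percolation tail bound can be invoked.

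The second gap is the mechanism by which a block is declared non-depolarized with small probability. In your sketch this is just the per-site erasure probability, but in the paper it is derived: Lemma~\ref{lemma:existshighwtpauli} shows that a size-$k$ connected component of non-depolarized blocks forces the existence of a single Pauli $s\in\mathsf{C}(\langle M_b\rangle)$ supported on at least $k/2$ of those blocks, and then a unitarity-based counting argument (Lemma~\ref{lemma:size_of_sw}: the number of input Paulis whose trajectory ever has weight $w$ inside a region $A$ is at most $d\binom{|A|}{w}3^w$, because each intermediate Pauli pulls back to a unique input Pauli) is combined with the survival probability $(1-\gamma)^{dw}$ to bound, via a union bound over $w$ and over components, the probability that any such high-weight surviving Pauli exists. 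This counting step is the concrete replacement for anticoncentration that you gesture at but do not supply. Without it, the ``subcritical'' regime you invoke has no quantitative handle in the worst case, since the circuit could conspire to keep low-weight Paulis low-weight throughout, which is exactly what the unitarity count rules out for all but a small set. Finally, a minor point: you frame the end result as an approximate sampler controlled in TV distance, but given a sampled $b$ the paper's core algorithm is exact, and the TV error only appears in the Markov-inequality conversion from expected to worst-case runtime; and the union bound over the $n$ output qubits is affordable because the cluster-size tail is exponential, so the constant-depth threshold survives it.
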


We use the term `geometrically local' to refer to circuits whose gates are nearest-neighbor on an $O(1)$-D lattice. Our algorithm has many desirable properties:
(1) it does not require assumptions on the output distribution or distribution from which the circuit is sampled. In particular, in contrast to the results of \cite{bremner_achieving_2017,gao_efficient_2018,aharonov_polynomial-time_2023,schuster_polynomial-time_2024} it does not require that the output distributions have the anticoncentration property (see \cite{hangleiter_computational_2023} for details on this property) and it applies to \textit{worst-case} circuits. 
(2) It performs \textit{exact} sampling with random runtime which is polynomial in expectation (and this can be modified to give the more common \textit{approximate} sampler with worst-case polynomial runtime).  (3) Only the depth threshold but not the runtime  depends on the noise strength.

Next, our proof techniques also show that noisy Clifford circuits acting on a random input bit string anticoncentrate in $O(\gamma^{-1} \log n)$ depth, which is proved in \cref{app:anticoncentration}. This implies that noisy random quantum circuits on arbitrary architectures anticoncentrate in $O(\log n)$-depth, which was only previously known for $1D$ and all-to-all connectivity \cite{dalzell_random_2022,deshpande_tight_2022} \footnote{recent concurrent work \cite{schuster_random_2024} also addresses this gap for the noiseless case}.
Our proof is conceptually significant because it demonstrates that anticoncentration can arise from the noise alone, rather than due to the randomness of the gates (in the noiseless setting, these circuits can be non-anticoncentrated).

\subsection{Related Works}

The strongest upper bound on the computational complexity of worst-case noisy circuits is due to Aharonov \textit{et al.} \cite{aharonov1996limitations}, who showed that the output distribution of any noisy quantum circuit converges to uniform after circuit depth which is logarithmic in system size, and is thus trivially simulatable. The same work \cite{aharonov1996limitations} also demonstrates a lower bound on the computational complexity of general quantum circuits by proving that noisy quantum circuits can solve problems in $QNC^1$ (decision problems solvable by noiseless logarithmic depth quantum circuits) with a quasipolynomial overhead in system size.
At the same time, Shor’s factoring algorithm can be implemented using log-depth quantum circuits \cite{cleve_fast_2000}.  
Together, these results provide some evidence that it is difficult in general to classically simulate noisy quantum circuits at logarithmic depths.
This suggests to start by understanding the classical simulatability of restricted families of noisy quantum circuits that correspond to the limitations of near-term hardware, such as geometric locality or limited gate sets. 

Various prior works have provided classical simulation algorithms for the setting of noisy Clifford circuits with single-qubit non-Clifford gates ~\cite{fujii_computational_2016,Virmani_2005,kempe2008upperboundsnoisethreshold,Plenio_2010,seddon_quantifying_2021,gonzálezgarcía2024paulipathsimulationsnoisy}. When applied to Clifford-Magic or Conjugated Clifford circuits, these algorithms are only efficient when the depolarizing noise rate is above $\sim 15\%$. Our algorithm expands this classical simulability regime by showing that circuits with \textit{any} constant noise rate are efficiently simulable as long as the depth is above a constant threshold.
Our algorithm is also conceptually quite different. In prior approaches, the noise converts a circuit containing non-Clifford (`magic') gates into a probabilistic mixture of Clifford circuits, which can then be simulated using the Gottesman-Knill theorem. 
In contrast, our algorithm leverages the loss of entanglement, rather than the loss of magic, as the key to classical simulatability. Specifically, the noise transforms a circuit with entangling gates into a mixture of circuits with a tensor-product form, which can be simulated via brute-force methods \footnote{Note, each circuit in the mixture can contain up to $\Omega(n)$ non-Clifford gates, which means Gottesman-Knill-type algorithms would likely fail unless the tensor product structure is exploited.}. Furthermore, we carefully account for the accumulation of noise in the circuit, and find that increasing the depth reduces the entanglement even further, allowing for classical simulability at lower noise strengths. At first glance, this is somewhat unintuitive because one might expect higher depth circuits to possess more entanglement and to be more robust to noise. Our results demonstrate an important idea: as depth increases, the decohering effects of accumulating noise always win out against the entangling/fault-tolerance abilities of quantum gates.

A rich landscape of existing work has also considered noisy circuits where the gates are chosen randomly ~\cite{aharonov_polynomial-time_2023,bremner_achieving_2017,TAKAHASHI2021117, gao_efficient_2018, mele2024noiseinducedshallowcircuitsabsence, schuster_polynomial-time_2024, fontana2023classicalsimulationsnoisyvariational}. The accuracy of these algorithms strongly relies on the randomness property and thus does not exclude the possibility of a \textit{worst-case} circuit which is cleverly designed to maintain robustness to noise (e.g. \cite{aharonov1996limitations,fujii_computational_2016}). Therefore, these results do not have implications for early fault-tolerance. The work of \cite{gonzálezgarcía2024paulipathsimulationsnoisy} and \cite{stilck_franca_limitations_2021} consider worst-case circuits; however, these results are incomparable since they only apply in the easier setting of estimating expectation values rather than sampling from the output distribution. 
The only prior work achieving both properties (worst-case and sampling before $\log(n)$ depth) is Refs.~\cite{rajakumar_polynomial-time_2024, oh_classical_2024}, which take advantage of commutation properties of the gate set to prove their bounds. In contrast, our results and our proof techniques suggest that classical simulatability can arise from more general properties such as unitarity and geometric locality.

We also note that phase transitions in entanglement properties of quantum circuits due to noise/percolation have been observed in a variety of other settings (e.g. \cite{Suzuki_2025,aharonov_quantum_2000,browne_phase_2008-1,fujii_computational_2016}); however, a key distinction in our results is that we demonstrate a phase transition in depth for any noise strength $\gamma = \Omega(1)$ (not just noise which is greater than a critical threshold).

\subsection{Proof Techniques} 

\begin{figure}
    \centering
    \includegraphics[width=0.95\columnwidth]{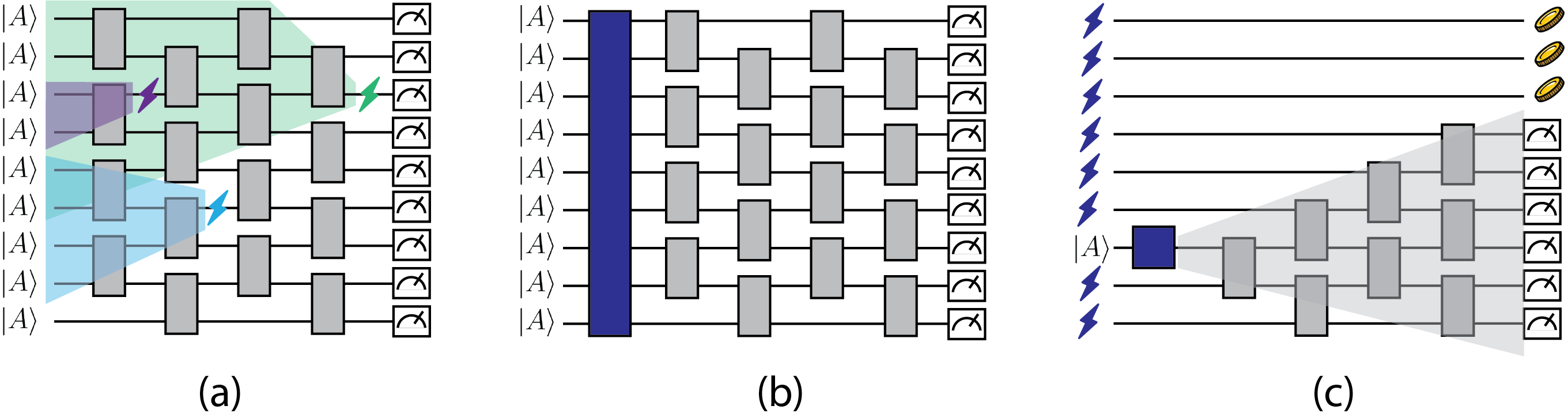}
    \caption{Overview of the simulation algorithm. (a) Depolarizing errors are sampled and propagated to the beginning of the circuit. $\ket{A}$ represents a single-qubit magic-state but can be any single-qubit state. (b) The circuit can now equivalently be represented as one error channel followed by the original noiseless circuit. (c) We prove that this input error channel has the effect of depolarizing many of the input qubits and so it remains to simulate the lightcones of qubits that are not depolarized. When these lightcones intersect they must be simulated together but can otherwise be simulated independently. Any measurements that are not in the lightcone of a depolarized qubit can be simulated by a random coin flip.}
    \label{fig:diagram}
\end{figure}

At a high-level, the algorithm works by first propagating errors to the beginning of the circuit. This reformulates the noisy circuit as one layer of noise followed by the ideal circuit, which is pictured in \cref{fig:diagram}b. 
The focus of much of the technical work is to show that this propagated error channel (which we denote by the channel $\Pi$) effectively depolarizes many of the input qubits, which is pictured in \cref{fig:diagram}c. This is similar to \cite{rajakumar_polynomial-time_2024, oh_classical_2024}; however, we require new tools to deal with the fact that the noise spreads out non-trivially, rather than commuting through the gates. In particular, we analyze which input Pauli operators can survive the noise. First, we show that 
\begin{quote}
    (1) It is exponentially unlikely (in the depth and the number of qubits) that any input Pauli operator acting on a large island of qubits survives $\Pi$.
\end{quote}
    We prove this statement using a careful counting argument, in place of assuming anticoncentration as done in prior work \cite{bremner_achieving_2017,gao_efficient_2018,aharonov_polynomial-time_2023}. While the Cliffordness of the circuit simplifies the Pauli propagation significantly, the main ingredient of the proof is the more general property of \textit{unitarity} to argue that not all high-weight Pauli operators can be mapped by the circuit to low-weight operators since this is a strictly smaller space and thus violates unitarity. This suggests that high-weight Pauli operators maintain their approximate weight throughout the circuit and thus are evenly suppressed by errors at various timesteps, i.e. they experience an exponential decay in both depth and weight. The next step of the analysis requires the depth to be large enough so that each Pauli operator experiences enough noise to suppress the exponential number of possible high-weight Pauli operators within an island of qubits. It turns out this occurs at a constant depth of around $d^* = O(\gamma^{-1}\log{\gamma^{-1}})$, which can be understand as a percolation effect that occurs due to \textit{geometric locality}. We then use the probabilistic method to show that 
     \begin{quote}
         (2) If a large island of qubits was not depolarized by $\Pi$, then there must be at least one Pauli operator with support on these qubits, which survives $\Pi$.
     \end{quote}
     Since we proved in (1) that the survival of large input Pauli operators is extremely unlikely under $\Pi$, then it must also be true by (2) that no large islands of qubits remain non-depolarized by $\Pi$.

     Having proved this result on the entanglement structure of the noisy state, we can now simulate the circuit by considering each island of qubits separately. If we were to simulate these qubits by brute-force, then this would only allow for a quasipolynomial time algorithm when $D >1$ for $\Omega(\log n)$-depth. However, we can improve this dramatically by decomposing the input state in the Pauli basis and only simulating Pauli operators that survive the channel - this is why our algorithm also works when there is all-to-all connectivity and no lightcones, although this requires the larger depth.

\subsection{Discussion and Future Work}
\label{sec:discussion}

\subsubsection{Naturally Fault-Tolerant Gate Sets} 

We consider the Clifford and IQP+CNOT gate sets to be `naturally' fault-tolerant because they are transversal for certain quantum error-correcting codes such as 3D color codes \cite{bombin2015gaugecolorcodesoptimal}, and thus can be implemented fault-tolerantly with relative ease. These facts have been exploited in various experimental implementations \cite{bluvstein_logical_2024,hangleiter_fault-tolerant_2024}. In addition, Clifford circuits, in particular, are used in many important error correction gadgets. For instance, the encoding and syndrome extraction circuits for any stabilizer code can be implemented as a Clifford circuit. Furthermore, Pauli noise propagates predictably through Clifford gates, which allows for a decoding algorithm to track the error at various times in the circuit in order to diagnose the error. In fact, there have been a variety of proposals for early fault-tolerance \cite{bravyi_quantum_2020,bergamaschi_liu,delfosse2023spacetimecodescliffordcircuits} involving propagation of interspersed Pauli errors through Clifford circuits where they can be corrected at the end of the computation, provided that the final measurement contains some syndrome information about the space-time location of the error. This is a crucial part of the techniques in \cite{bravyi_quantum_2020}, which demonstrates quantum advantage over shallow classical circuits in the presence of noise. These desirable error correction properties have led to the proposal of using only Clifford gates on magic-state input for achieving fault-tolerant universal quantum computation \cite{bravyi_universal_2005}. The crucial difference in our setting is that we do not allow for feed-forward adaptive operations, which is responsible for the universality.

\subsubsection{Noisy Quantum Advantage without Intermediate Measurement/Reset}

Our results rule out certain proposals for quantum advantage in noisy circuits without intermediate measurement or reset operations (which would otherwise enable fault-tolerance through the threshold theorem). 
In particular, one might hope that it would be possible to construct some clever Clifford/IQP+CNOT circuit to demonstrate quantum advantage, where resource states are prepared perfectly, and interspersed errors are corrected using final measurement data (e.g. like \cite{yunchao_quynh} or \cite{bravyi_quantum_2020}). However, our results imply that such circuits cannot be deeper than constant depth when they are implemented under geometric locality and gate set constraints.

\subsubsection{Computational Complexity Phase Transition at Constant Depth}

We further conjecture that the depth threshold for classical simulatability in the geometrically local case corresponds directly to a phase transition in computational complexity (as demonstrated in the IQP case in \cite{rajakumar_polynomial-time_2024}). That is, it may be possible to construct hard-to-sample, geometrically local quantum circuits at depths right below the threshold (at least for exact sampling). Indeed, a primary motivation for developing asymptotically efficient classical simulation algorithms is to narrow down the regime of near-term quantum advantage, and thereby point towards candidates for noise-robust quantum advantage. By process of elimination, our results point to the possibility that Clifford-magic/IQP+CNOT circuits could evade classical simulatability at depths below $O(1)$. Such a construction could have important consequences for demonstrating near-term quantum advantage. \cite{bergamaschi_liu} provides some evidence in favor of this conjecture by demonstrating that constant-depth geometrically local noisy Clifford circuits possess long-range entanglement.

\subsubsection{Peaked Quantum Circuits}
Our work is among the first classical sampling algorithms for noisy quantum circuits that apply to worst-case circuits, albeit with restrictions on connectivity and gate sets. This is accomplished by removing the need for the anticoncentration property in our analysis. This not only allows for classical simulatability beyond average case, but also shows that our algorithm works even when the circuit output distribution is bounded away from uniform (the distributions we can simulate may be peaked). Peaked circuits have recently been proposed as a potential candidate for verifiable quantum advantage \cite{aaronson_verifiable_2024} and are by definition not amenable to simulation techniques relying on anticoncentration. Our methods do not have the same fundamental restrictions and so we are able to rule out noisy quantum advantage for a certain class of candidate circuits with peaked distributions\footnote{Note that it is known that constant-depth peaked circuits can be simulated in quasipolynomial time \cite{bravyi_classical_2023} but it is an open question whether they are simulatable at $O(\log n)$-depth and/or with noise.}.

\subsubsection{Generality of Percolation} \label{sec:generality}
Another ingredient in our simulation result that may apply more generally is the existence of a constant depth threshold for noise percolation in geometrically local circuits. In recent prior work, noise percolation was shown to exist in noisy IQP circuits \cite{rajakumar_polynomial-time_2024}, which also resulted in classical simulation. For these IQP circuits, the analysis relies on the special property that commuting gates cannot spread entanglement very quickly (i.e. each qubit's lightcone grows linearly with depth) and that certain noise channels commute with the gate set. This may give reason to believe that this percolation effect is unique to IQP circuits. However, our results show that in more general settings, such as Clifford circuits, where the noise can spread in much more complicated ways, percolation still occurs. In these cases, the cause of percolation can be attributed instead to geometric locality, which may be a more fundamental limitation to near-term quantum circuits.

\subsubsection{Current Experiments}
Our results may also provide avenues to classically simulate real quantum experiments. For example, going beyond geometric locality, can we use percolation to prove a sublogarithmic depth threshold for classical simulatability on partially constrained architectures, such as the hypercube connectivity of \cite{hangleiter_fault-tolerant_2024}, which was recently implemented experimentally \cite{bluvstein_logical_2024}? Note that the IQP+CNOT circuits of \cite{hangleiter_fault-tolerant_2024} require $O(\log n)$ depth. Our algorithm for the all-to-all connectivity case shows that these circuits will be classically simulatable if they are scaled past a $O(\gamma^{-1}\log n)$ depth (for noise strength $\gamma$). It would be of practical interest to see if this bound can be improved by considering the locality structure of the hypercube.


\subsection{Paper Outline} In \cref{sec:prelims} we give the preliminaries and notation. In \cref{sec:main} we describe the algorithm. In \cref{sec:relation}, we describe how these results directly apply to IQP+CNOT circuits and to conjugated Clifford circuits.



\section{Preliminaries} \label{sec:prelims}

\subsection{Pauli and Clifford Group}

We define the Pauli group as:

\begin{align*}
        \mathsf{P}_n := \{1,i,-1,-i\} \times \{\eye,X,Y,Z\}^{\otimes n}
    \end{align*}
We define the phaseless Pauli group $\hat{\mathsf{P}}_n \leqslant \mathsf{P}_n$ as:
    \begin{align*}
        \hat{\mathsf{P}}_n := \{1\} \times \{\eye,X,Y,Z\}^{\otimes n}
    \end{align*}   Let $\mathsf{P}^Z_n$ denote the subgroup of $\mathsf{P}^Z_n$ where each single-qubit operator is in $\{\eye,Z\}$. Let $\mathsf{P}^X_n$ be defined similarly.  Let $X_i \in \mathsf{P}_n$ denote the n-qubit Pauli operator where $X$ acts on qubit $i$ and $\eye$ acts elsewhere and define $Z_i$ analogously. The Clifford group is defined as:
    \begin{align*}
        \mathsf{C}_n := \{ U \in \mathsf{U}(2^n): UPU^\dagger \in \mathsf{P}_n \forall P \in \mathsf{P}_n\}
    \end{align*}
    where $\mathsf{U}(2^n)$ is the group of $2^n \times 2^n$ unitary matrices.

    

\subsection{Binary symplectic representation and Tableau matrix}
The binary symplectic representation is a mapping $v: \hat{\mathsf{P}}_n \rightarrow \mathbb Z^{2n}_2$ where the first $n$ bits represent the $X$-components and the second $n$ bits represent the $Z$-components of the corresponding Pauli operator. More precisely, for any $s \in \hat{\mathsf{P}}_n$, if the $i$th qubit of $s$ contains $X$ or $Y$ then the $i$th bit of $v(s)$ is $1$, otherwise it is 0. Similarly, if the $i$th qubit of $s$ contains $Z$ or $Y$ then the $(i+n)$th bit of $v(s)$ is $1$, otherwise it is 0. The symplectic inner product between two vectors $u$ and $w$ is defined as $u ^T \Lambda w$, where
\begin{align*}
    \Lambda =
        \begin{pmatrix}[cc]
                    0 & \eye_n \\
                    \eye_n & 0
        \end{pmatrix}
    \end{align*}
Notice that $u ^T \Lambda w= 0$ if and only if the corresponding Pauli operators commute.

Given a set of Pauli operators $M \subseteq \mathsf{P_n}$, we define $ \mathrm T_M$ to be the $|M| \times 2n$ binary matrix where each row represents the binary symplectic vector corresponding to a Pauli operator in $M$. We call this the `tableau matrix'. We will later make use of the fact that the nullspace of $\mathrm T_M\Lambda$ over $\mathbb Z_2$ exactly corresponds to the subgroup of all Pauli operators that commute with all elements of $M$.

\subsection{Channels and Unitaries}

We use $C$ to refer to Clifford unitaries and $\mathcal{C}$ to refer to their corresponding channel $\mathcal{C}(\cdot) = C (\cdot) C^\dagger$. Given a Clifford unitary $C$ that can be decomposed into a sequence of layers $C = U_d \dots U_1$, we denote the subsequence of unitaries up to timestep $i$ as:
\begin{align}
    C_{i} :=U_i \dots U_1
\end{align}
Our results apply for single-qubit depolarizing channels:
\begin{align}
    \mathcal{E}(\rho) = (1-\gamma)\rho + \gamma\frac{\eye}{2}\Tr(\rho)
\end{align}
Throughout, $\Tr_A$ denotes the partial trace over qubits in $A$ and $\Tr_{-A}$ denotes the partial trace over all qubits not in $A$.
We will often refer to noisy Clifford circuits as $ \Tilde{\mc C} = \mathcal{E}^{\otimes n} \mathcal{U}_d \mathcal{E}^{\otimes n} \dots \mathcal{U}_1 \mathcal{E}^{\otimes n} $, where each $\mathcal{U}_i$ is a layer of the Clifford circuit.

\subsection{Output distributions}
If $\{F_x\}_x$ is a POVM representing the measurement basis, $\rho$ is an initial state, and $\Tilde{\mc C}$ is a channel representing a noisy circuit, we will refer to the resulting output distribution as $p_{\tilde{\mc C}}$,
\begin{align}
     p_{\tilde{\mc C}}(x) = \Tr(F_x \Tilde{\mc C}(\rho))
\end{align}
Moreover, we use TVD to describe how close two distributions are (i.e. approximation error),
\begin{equation}
     \Delta(p,q):=\left\|p-q\right\|_1 := \sum_{x\in\{0,1\}^n}\left|p(x)-q(x)\right|
\end{equation}

\section{Main Results} \label{sec:main}

\begin{theorem} \label{theorem:main}
    Suppose $\mc C$ is a Clifford circuit of depth $d$ on $n$ qubits. Let $\tilde{\mc C}$ denote the noisy implementation of $\mc C$, where each layer is interspersed with depolarizing channels $\mathcal{E}$ of strength $\gamma$ on every qubit. Let $p_{\tilde{\mc C}}$ be the output distribution obtained from preparing $n$ qubits in an arbitrary product state, applying $\tilde{\mc C}$, and measuring each qubit in an arbitrary basis. There exists a depth threshold $d^* = O(\gamma^{-1} \log n)$ such that when $d > d^*$, there exists a randomized classical algorithm that exactly samples from $p_{\tilde{\mc C}}$ with random runtime $T$ of expected value,
    \begin{align}
        E[T] = O(\poly(n))
    \end{align}
    If $\mc C$ is further restricted to nearest-neighbor gates on a $D$-dimensional lattice, there exists a constant depth threshold $d_{local}^* = O(\gamma^{-1}3^{2D}+ D\gamma^{-1}\log(D\gamma^{-1}))$ such that when $d > d_{local}^*$, 
    \begin{align}
        \E [T] =O(d^{3+D}n^4)
    \end{align}
\end{theorem}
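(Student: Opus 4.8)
The plan is to realize the noisy circuit as a single effective input-noise channel followed by the ideal Clifford unitary, and then to show that this channel depolarizes enough input qubits that the residual computation factorizes into small, independently-simulable pieces. The first step is to sample the depolarizing errors independently at each layer and each qubit — each single-qubit $\mathcal{E}$ is, with probability $\gamma$, replaced by a uniformly random Pauli (and with probability $1-\gamma$ by the identity) — and then, using that $\mc C$ is Clifford and the sampled errors are Paulis, to commute every sampled Pauli error back through the circuit to the input. This reformulates $\tilde{\mc C}$, \emph{conditioned on the error sample}, as $\mathcal{C}$ acting on the input state after a random Pauli has been applied to it; averaging over the error sample gives a distribution over input Paulis. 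The key structural claim to establish here is that for each input qubit $i$, the marginal action of this random-Pauli channel on qubit $i$ is depolarizing with probability at least $1 - (1-\Omega(\gamma))^{\Theta(d)}$: a qubit is ``depolarized'' once errors have been seeded into its backward lightcone in a way that, after propagation, scrambles its single-qubit reduced state to $\eye/2$. Because each layer independently injects errors, and each injected error in the lightcone contributes, this failure probability is exponentially small in $d$, so for $d \ge \Omega(\gamma^{-1}\log(\cdot))$ the ``bad'' (non-depolarized) qubits are sparse.

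Next I would set up the geometric-locality reduction. Once qubit $i$ is depolarized at the input, it contributes a factor of $\eye/2$, so its forward lightcone can be traced out and the corresponding output measurements that lie \emph{only} in depolarized lightcones are simulated by independent fair coin flips (this is the ``random coin flip'' step of \cref{fig:diagram}c). For the non-depolarized input qubits, group them by whether their forward lightcones intersect: lightcones that are disjoint can be simulated completely independently, so the runtime is controlled by the largest connected cluster of intersecting lightcones of non-depolarized qubits. Here is where percolation enters: in $D$ dimensions a depth-$d$ lightcone has radius $O(d)$ and touches $O(d^D)$ sites, and the non-depolarized qubits appear independently with probability $q := (1-\Omega(\gamma))^{\Theta(d)}$; choosing $d \ge \Omega(D\gamma^{-1}\log(D\gamma^{-1}))$ makes $q \cdot (\text{number of lightcones meeting a fixed one}) \le q \cdot O(d^D) < 1$, i.e.\ we are in the subcritical regime of a site-percolation-type process on the ``lightcone intersection graph.'' Standard subcritical percolation bounds then give that the cluster containing any fixed vertex has size $O(\log n)$ except with probability $1/\poly(n)$, and in fact has an exponential tail, which is what lets the \emph{expected} runtime stay polynomial even though worst-case clusters can be larger.

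Finally I would assemble the runtime bound. A cluster of $k$ non-depolarized qubits has a combined lightcone of $O(k d^D)$ qubits, and simulating the ideal Clifford circuit restricted to that many qubits (applying the input Paulis on the non-depolarized qubits, $\eye/2$ on the depolarized ones inside the cluster, and reading off the product-basis measurement outcomes) costs $\poly(k d^D)$ by Gottesman–Knill-type stabilizer simulation — or, more efficiently, by the sparse Pauli-basis representation mentioned in the overview, but the crude stabilizer bound already suffices for the stated $O(d^{3+D} n^4)$. Summing $\poly(k d^D)$ against the exponential tail on $k$, over the $\le n$ clusters, and folding in the $O(n)$ time to sample and propagate the errors and the $O(n)$ coin flips, yields $\E[T] \le O(d^{3+D} n^4)$; since the sampler is exact on each cluster and the clusters are independent, the overall sampler is exact. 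For the all-to-all case one drops the lightcone structure entirely and instead argues that the propagated input-Pauli channel annihilates all but $\poly(n)$ Pauli operators in the decomposition of the input state once $d \ge \Omega(\gamma^{-1}\log n)$, so a direct Pauli-path sum over the surviving terms runs in $\poly(n)$ expected time.

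The main obstacle I anticipate is the structural depolarization claim: proving that a single injected Pauli error in qubit $i$'s backward lightcone, after Clifford propagation, actually \emph{randomizes} qubit $i$'s reduced input state rather than producing some correlated-but-not-maximally-mixed error. The subtlety is that a propagated Pauli might act trivially on qubit $i$, or the same error might be ``undone'' combinatorially; one has to argue — likely using unitarity and the tableau/symplectic structure of \cref{sec:prelims}, e.g.\ that the set of input Paulis that the circuit maps to something supported away from qubit $i$ is a proper subgroup — that a constant fraction of the independently-sampled error patterns in the lightcone genuinely depolarize qubit $i$, so that the per-qubit failure probability really does decay as $(1-\Omega(\gamma))^{\Theta(d)}$ uniformly over all Clifford circuits. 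Getting the constants in this exponent right, independent of the circuit, is what makes the bound worst-case rather than average-case, and is the crux of the whole argument.
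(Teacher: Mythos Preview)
Your high-level picture --- sample errors, propagate them through the Clifford to the input, identify depolarized input qubits, cluster the rest by intersecting lightcones, and simulate each cluster --- matches the paper. But the two load-bearing technical claims in your geometrically-local argument do not go through as written.

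The main gap is the percolation step. You assert that ``the non-depolarized qubits appear independently with probability $q$'' and then invoke subcritical site percolation. They do not appear independently: whether qubit $i$ is depolarized is the event $X_i,Z_i\in\langle M_b\rangle$, a global property of the single error sample $b$, and these events are correlated across qubits (errors far from $i$ can participate in a product witnessing $X_i\in\langle M_b\rangle$). You also correctly flag the per-qubit marginal bound as the crux, but a union bound over Paulis $s$ with $s_i\neq I$ that survive $\Pi_{M_b}$ already produces an $n$-dependent estimate, and the paper never proves such a per-qubit bound. Instead, the paper bypasses both issues with a dual argument: \cref{lemma:existshighwtpauli} shows that a size-$k$ connected component of non-depolarized sublattices forces the existence of a \emph{single} surviving Pauli $s\in\mathsf{C}(\langle M_b\rangle)$ supported on at least $k/2$ of those sublattices. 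One then bounds directly, via the unitarity-based count $|S_w^A|\le d\binom{|A|}{w}3^w$ (\cref{lemma:size_of_sw}) and the survival probability $(1-\gamma)^{dw}$, the probability that \emph{any} such high-weight Pauli survives. This yields the exponential tail on $|L_j|$ with no independence hypothesis at all.

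A secondary gap: Gottesman--Knill does not give cost $\poly(kd^D)$ for a cluster with $k$ non-depolarized inputs, because those inputs are arbitrary single-qubit states (e.g.\ magic states), not stabilizer states; any stabilizer-rank decomposition costs $2^{\Theta(k)}$, not $\poly(k)$. The paper's runtime bound instead comes from Pauli-basis simulation together with \cref{lemma:num_surviving_paulis}, $\E|\langle G_j\rangle|\le d\,e^{3(1-\gamma)^d|L_j|}$, and it is the interplay of this estimate with the tail on $|L_j|$ --- two exponentials in $|L_j|$ with competing signs once $d>d_{local}^*$ --- that produces the stated $O(d^{3+D}n^4)$. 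Your all-to-all sketch (sparse surviving-Pauli sum at $d\ge\Omega(\gamma^{-1}\log n)$) is, by contrast, essentially what the paper does in that case.
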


Using a standard reduction (applying Markov's inequality), we can convert this `Las Vegas' algorithm into a `Monte Carlo' algorithm with polynomial worst-case runtime. 

\begin{corollary} \label{corollary:monte-carlo}
Using the same notation as \cref{theorem:main}, there exists a depth threshold $d^* = O(\gamma^{-1} \log n)$ such that when $d > d^*$, there exists a randomized classical algorithm that samples from $q_{\tilde{\mc C}}$, such that $\|q_{\tilde{\mc C}} - p_{\tilde{\mc C}}\|_{TVD} \leq \epsilon$, with runtime
    \begin{align}
        T = O(\epsilon^{-1}\poly(n))
    \end{align}
If $\mc C$ is further restricted to nearest-neighbor gates on a $D$-dimensional lattice, there exists a constant depth threshold $d_{local}^* = O(\gamma^{-1}3^{2D}+D\gamma^{-1}\log(D\gamma^{-1}))$ such that when $d > d_{local}^*$, 
\begin{align}
    T = O(\epsilon^{-1}d^{3+D}n^4)
\end{align}
\end{corollary}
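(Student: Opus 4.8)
The plan is to reduce to \cref{theorem:main} by a standard truncation-and-restart argument applied to the Las Vegas algorithm. Let $\mcA$ denote the exact sampler of \cref{theorem:main}, with random runtime $T$ satisfying $\E[T] \le B$, where $B = O(\poly(n))$ in the all-to-all case and $B = O(d^{3+D}n^4)$ in the geometrically local case. First I would fix a cutoff $T_0 := B/\epsilon$ and define the modified algorithm $\mcA'$ as follows: run $\mcA$; if it halts within $T_0$ elementary steps, output its sample; otherwise, abort and output an arbitrary fixed string (say $0^n$). By Markov's inequality, $\Pr[T > T_0] \le \E[T]/T_0 = \epsilon$, so $\mcA'$ agrees with $\mcA$ except on an event of probability at most $\epsilon$.

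Next I would bound the resulting sampling error. Let $q_{\tilde{\mc C}}$ be the output distribution of $\mcA'$ and $p_{\tilde{\mc C}}$ that of $\mcA$, which by \cref{theorem:main} is exactly the target distribution. Coupling the two executions on the same randomness, they produce identical outputs unless $T > T_0$; hence for every $x$, $|q_{\tilde{\mc C}}(x) - p_{\tilde{\mc C}}(x)| \le \Pr[T > T_0 \text{ and (one of the two outputs is } x)]$, and summing over $x$ gives $\Delta(q_{\tilde{\mc C}}, p_{\tilde{\mc C}}) \le 2\Pr[T > T_0] \le 2\epsilon$. Rescaling $\epsilon \mapsto \epsilon/2$ (which only changes constants inside the $O(\cdot)$) yields the claimed TVD bound $\|q_{\tilde{\mc C}} - p_{\tilde{\mc C}}\|_{TVD} \le \epsilon$. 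The worst-case runtime of $\mcA'$ is at most $T_0 + O(1) = \epsilon^{-1} B + O(1)$, which is $\epsilon^{-1} O(\poly(n))$ in the all-to-all case and $\epsilon^{-1} O(d^{3+D}n^4)$ in the local case, as desired. The depth thresholds $d^*$ and $d^*_{local}$ are inherited verbatim from \cref{theorem:main}.

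There is no substantive obstacle here — the corollary is a black-box transformation of a Las Vegas algorithm into a Monte Carlo one, and the only points requiring minor care are (i) making sure the aborted runs are accounted for correctly in the TVD (the factor of $2$ above, from the possibility that both the true and the substituted output could be any given $x$), and (ii) confirming that $\mcA$ can be interrupted cleanly after a prescribed number of steps, which holds since it is an explicit classical procedure whose running time is measured in elementary operations. One small subtlety worth noting in the writeup: the runtime of \cref{theorem:main} is a bound on $\E[T]$, so the cutoff $T_0$ must be chosen relative to this expectation bound $B$ rather than relative to $\E[T]$ itself, but since $B$ is the quantity the theorem actually provides this is automatic.
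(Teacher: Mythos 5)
Your proposal is correct and follows essentially the same argument as the paper: both apply Markov's inequality to the expected runtime from \cref{theorem:main}, truncate the Las Vegas sampler at a cutoff of order $\epsilon^{-1}\E[T]$, output a trivial string on abort, and bound the resulting TVD error by (a constant times) the abort probability $\epsilon$. The only cosmetic differences are that the paper aborts by checking the size of $|\langle G_j\rangle|$ at Step \ref{step:7} and outputs a uniformly random string, whereas you abort on a raw step-count cutoff and output a fixed string, and you make the factor-of-two/rescaling step explicit where the paper cites a standard argument.
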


\subsection{Preprocessing the Noisy Circuit} \label{sec:preprocessing}
In this section, we outline a few techniques we use to simplify the noisy circuit, and make it more amenable to classical simulation.

\subsubsection{Stochastic Application of Noise Channels} \label{sec:stochastic}
A key idea is to view each depolarizing channel in the circuit as a stochastic process that traces out the target qubit and replaces it with the identity with probability $\gamma$ (which we refer to as a `depolarizing error'), and does nothing otherwise. This stochastic method of applying the depolarizing channel is exactly equivalent in expectation to the more common method of treating it as a deterministic CPTP map. To formalize this, we define the following, 
\begin{definition} \label{def:main}
    Suppose $ \Tilde{\mc C} = \mathcal{E}^{\otimes n} \mathcal{U}_d \mathcal{E}^{\otimes n} \dots \mathcal{U}_1 \mathcal{E}^{\otimes n} $ (where each $\mathcal{U}_t$ is a layer of two-qubit Clifford gates). Let $\mc L$ represent the ensemble of error locations.
    For any $b$ drawn from $\mc L$, we use $\tilde{\mc C}_b$ to denote the channel that corresponds to applying a depolarizing error at each of the locations specified by $b$ in $\mc C$.
\end{definition}
Note by definition,
\begin{align}
    \tilde{\mc C}(\rho) = \E_b \tilde{\mc C}_b(\rho)
\end{align}
where $\E_b$ is used throughout the paper as shorthand for $\E_{b \sim \mc L}$. Note, $p_{\tilde{\mc C}}(x) = \E_b p_{\tilde{\mc C}_b}(x)$ because the expectation is a linear operator that commutes with any POVM element. Thus, we can exactly sample from $p_{\tilde{\mc C}}$ by first sampling $b$ and then exactly sampling the measurement outcomes of $\tilde{\mc C}_b$.

\subsubsection{Error Propagation through Clifford Circuits} \label{sec:local}

First, we sample a configuration of depolarizing errors $b \sim \mc L$ as described in \cref{sec:stochastic}. We will now show how to sample from $p_{\tilde{\mc C}_b}$. A crucial step of our algorithm is to convert the depolarizing errors in $\tilde{\mc C}_b$ to an error channel that acts only on the circuit input. To this end, we introduce the following notation,
\begin{definition}[Pauli projection channel]
\begin{align}
    \Pi_P(\rho) &= \frac{1}{2}\rho + \frac{1}{2}P \rho P^\dagger
\end{align}
where $P \in \mathsf{P}_n$ given an $n$-qubit state $\rho$.
\end{definition}
Note since $\Pi_{X} \circ  \Pi_{Z} (\rho) = \frac{1}{4} \rho + \frac{1}{4} X\rho X^\dagger + \frac{1}{4} Y\rho Y^\dagger + \frac{1}{4} Z\rho Z^\dagger = \eye/2$, the depolarizing errors in $\tilde{\mc C}_b$ can be simulated by the application of $\Pi_{X} \circ \Pi_{Z}$. The advantage of simulating depolarizing errors by applying these Pauli projection channels is that they can be propagated to the beginning of the Clifford circuit. For instance given a layer of Clifford gates $\mc U$, then
    \begin{align}
        \Pi_P \circ \mathcal{U} (\rho)  &= \frac{1}{2} U \rho U^\dagger + \frac{1}{2} P  U \rho U^\dagger P^\dagger \\
         &= \frac{1}{2} U \rho U^\dagger + \frac{1}{2} U U^\dagger P  U \rho U^\dagger P^\dagger U U^\dagger \\
        &=  \mc U \circ \Pi_{\mc U^\dagger(P)} (\rho)
    \end{align}
This motivates us to define the following,
\begin{definition} [Error Propagation]
    Let $M_b$ denote the set of Pauli operators generated by replacing depolarizing errors in $\tilde{\mc C}_b$ with Pauli projection channels and propagating these to the beginning of the circuit. Define the corresponding channel $\Pi_{M_b}(\rho) :=\bigcirc_{P \in M_b} \Pi_{P} (\rho)$ acting on the circuit input.
\end{definition}
Note we have shown that $\tilde{\mc C}_b(\rho) = \mc C \circ \Pi_{M_b}(\rho)$. Now, letting $\langle M_b \rangle \leqslant \mathsf{P}_n$ denote the subgroup generated by $M_b$ we can alternatively write $\Pi_{M_b}(\rho)$ as,
\begin{align} \label{eq:errorgroup}
    \Pi_{M_b}(\rho) =\bigcirc_{P \in M_b} \Pi_{P} (\rho)= \frac{1}{|\langle M_b \rangle|}\sum_{P \in \langle M_b \rangle}P \rho  P^\dagger
\end{align}
Crucially, the right-hand side of Eq. \ref{eq:errorgroup} only depends $\langle M_b \rangle$ rather than the generators $M_b$ themselves. Thus, this channel is equivalent to any composition of Pauli projection channels where the associated Pauli operators also generate $\langle M_b \rangle$. We occasionally refer to $\langle M_b \rangle$ as the ``error group".

\subsubsection{Action of Propagated Errors on Pauli Operators}
Note that the input state can always be written as a sum of Pauli operators using the Pauli basis and so it will be insightful to characterize the action of $\Pi_{M_b}$ on Pauli operators. This is described by the following fact.
\begin{fact}  \label{lemma:survivingpaulis}
    Let $s \in \hat{\mathsf{P}}_n$. If $s$ commutes with every Pauli operator in $M_b$, then $\Pi_{M_b}(s) = s$. Otherwise, $\Pi_{M_b}(s) = 0$
\end{fact}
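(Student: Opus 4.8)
The plan is to work directly from the group-averaged form of the propagated error channel in Eq.~\eqref{eq:errorgroup}, namely $\Pi_{M_b}(s) = \frac{1}{|\langle M_b\rangle|}\sum_{P\in\langle M_b\rangle} P s P^\dagger$. Since $s$ and each $P$ are (phaseless) Pauli operators, $PsP^\dagger = \pm s$, with the sign equal to $+1$ if $P$ and $s$ commute and $-1$ if they anticommute. So the whole sum collapses to $\left(\frac{1}{|\langle M_b\rangle|}\sum_{P\in\langle M_b\rangle}\chi_s(P)\right) s$, where $\chi_s(P)\in\{+1,-1\}$ records commutation. The claim then reduces to evaluating this average of signs.

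The key step is to observe that $\chi_s\colon \langle M_b\rangle \to \{\pm 1\}$ is a group homomorphism: commutation of Paulis is controlled by the symplectic inner product, which is bilinear over $\mathbb{Z}_2$, so $\chi_s(PQ) = \chi_s(P)\chi_s(Q)$ (here the phases appearing in $PQ$ as an element of $\mathsf{P}_n$ are irrelevant, since they cancel in $PsP^\dagger$ — it is cleanest to phrase $\chi_s$ via the symplectic form $v(\cdot)^T\Lambda\, v(s)$ on the binary representations, which is manifestly additive). Now I would split into the two cases. If $s$ commutes with every generator in $M_b$, then $\chi_s$ is trivial on a generating set, hence trivial on all of $\langle M_b\rangle$, so every term in the sum is $+s$ and the average is exactly $s$. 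If $s$ fails to commute with some element of $M_b$, then $\chi_s$ is a nontrivial homomorphism to $\{\pm 1\}$; its kernel is therefore an index-two subgroup of $\langle M_b\rangle$, so exactly half the elements $P$ give $+1$ and half give $-1$, and the signed average is $0$, whence $\Pi_{M_b}(s)=0$.

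The only mild subtlety — and the one place to be careful rather than the "hard part" — is the bookkeeping of phases: elements of $\langle M_b\rangle \leqslant \mathsf{P}_n$ may carry factors of $\pm 1, \pm i$, and one must check these do not spoil either the homomorphism property of $\chi_s$ or the conjugation identity $PsP^\dagger = \chi_s(P)\,s$. But conjugation by $P$ is insensitive to the scalar phase of $P$ (the phase and its conjugate cancel), so working with $\chi_s$ as a function of the phaseless part $v(P)$ via the symplectic form sidesteps this entirely; bilinearity of $u^T\Lambda w$ over $\mathbb{Z}_2$ gives the homomorphism property for free. Everything else is the standard fact that a nontrivial character of a finite group sums to zero over the group (equivalently, an index-two subgroup contains exactly half the elements), so no further work is needed.
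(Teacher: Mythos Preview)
Your proof is correct, but it takes a different route from the paper's. The paper works with the \emph{composition} form $\Pi_{M_b}(s)=\bigcirc_{P\in M_b}\Pi_P(s)$ rather than the group average: if $s$ commutes with every $P\in M_b$ then each factor $\Pi_P(s)=\tfrac12(s+PsP^\dagger)=s$, while if some $P\in M_b$ anticommutes with $s$ then that single factor already gives $\Pi_P(s)=\tfrac12(s-s)=0$ and the whole composition vanishes. This is more elementary---no need to argue that $\chi_s$ is a homomorphism or to invoke the index-two kernel / character-sum fact---and it only requires checking commutation against the generating set $M_b$, never the full group $\langle M_b\rangle$. Your approach via the group average and the nontrivial-character-sums-to-zero argument is perfectly valid and arguably more conceptual (it makes the projector-onto-the-centralizer picture explicit), but it is slightly heavier machinery than what is needed here.
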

\begin{proof}
    Recall that by \cref{eq:errorgroup}, $\Pi_{M_b}(s) = \bigcirc_{P \in M_b} \Pi_{P} (s)$. If $s$ commutes with all elements of $M_b$ then we have that for any $P \in M_b$
    \begin{align}
        \Pi_P(s) = \frac{1}{2}(s + PsP^\dagger) = \frac{1}{2}(s + s) = s
    \end{align}
    Thus, $\bigcirc_{P \in M_b} \Pi_{P} (s) = s$.
    Otherwise, there exists some $P \in M_b$ such that $P$ anti-commutes with $s$. In this case, 
    \begin{align}
        \Pi_{P}(s) = \frac{1}{2}(s + PsP^\dagger) = \frac{1}{2}(s - s) = 0
    \end{align}
    and so $\bigcirc_{P \in M_b} \Pi_{P} (s) = 0$.
\end{proof}

The set of Pauli operators that are preserved by the propagated error channel form a Pauli subgroup that is precisely the centralizer of $\langle M_b \rangle$, which we denote as $\mathsf{C}(\langle M_b \rangle)$. Recall that the centralizer of a group is the set of elements that commute with each element of the group. Using the tableau representation, $\mathsf{C}(\langle M_b \rangle)$ exactly corresponds to the nullspace of $\mathrm T_{M_b}\Lambda$.


\subsubsection{Effectively Depolarized Qubits}

We can now determine if an input qubit is depolarized by $\Pi_{M_b}$ by inspecting the error group $\langle M_b \rangle$. In particular, we have that if $X_i,Z_i \in \langle M_b\rangle $ then 
\begin{align}
    \label{eq:xiziimpliesdepol}
    \Pi_{M_b}(\rho) =\Pi_{M_b} \circ \Pi_{X_i} \circ \Pi_{Z_i} (\rho) = \Pi_{M_b} \left(\frac{\eye_i}{2} \otimes \Tr_i(\rho)\right)
\end{align}
where the first equality follows because $X_i,Z_i \in \langle M_b\rangle$ implies $\langle M_b \rangle = \langle M_b \cup X_i \cup Z_i\rangle$ and so by Eq. \ref{eq:errorgroup}, $\Pi_{M_b}$ and $\Pi_{M_b} \circ \Pi_{X_i} \circ \Pi_{Z_i}$ implement the same channel. Note that qubit $i$ of the input state has now become maximally mixed. This motivates us to define depolarized qubits as follows.

\begin{definition} \label{def:depolarization} [Depolarized Qubits]
    For any qubit $i$, if $X_i,Z_i \in \langle M_b\rangle$, then we say qubit $i$ is \textit{depolarized}. Otherwise, qubit $i$ is \textit{non-depolarized}. 
\end{definition}

Our algorithm will rely on the fact that with high probability an overwhelming majority of qubits become depolarized in this way. We will later show that in this case the depolarized qubits partition the circuit into small islands of non-depolarized qubits that can be simulated independently.

\subsection{Description of the Algorithm}
 
We can now write out the full algorithm, which is illustrated in Fig.~\ref{fig:diagram}. To summarize, we will sample some configuration of error locations, use this to form subsets of non-depolarized qubits that can be simulated independently, and apply a standard sampling-to-computing reduction (e.g.\cite{bremner_achieving_2017}) to sample using the fact that we can compute marginals efficiently. We prove correctness below in \cref{lemma:correctness}, and leave the full proof of runtime to \cref{sec:runtime}

\begin{samepage}
    \begin{algorithm}[]

\SetKwInOut{Promise}{Promise}
\SetKwInOut{Parameter}{Parameter}
\caption{Sampler for noisy Clifford-magic circuits}
\KwInput{Depth-$d$ circuit $C$, $n$-qubit state $\rho$, POVM $\{F_x\}_x$, error rate $\gamma$}
\Promise{$C$ is a Clifford circuit, $\rho$ is a product-state, $F_x = \otimes_i \mc U_i(\ketbra{x}{x})$ where each $\mc U_i$ is an abritrary single-qubit gate acting on qubit $i$}
\KwOutput{A sample from the probability distribution $p_{\tilde{\mc C}}$ defined by $p_{\tilde{\mc C}}(x) = \Tr(F_x \Tilde{\mc C}(\rho))$}
 Let $M_b = \{\}$ \label{step:1}
 \label{alg:1}
 
 For each depolarizing channel acting on qubit $i$ after timestep $t$, with probability $\gamma$ add both $\mc C_t^\dagger(X_i)$ and $\mc C_t^\dagger(Z_i)$ to $M_b$. \label{step:2}
 
 For each qubit $i$, check if both $X_i,Z_i \in \langle M_b \rangle$. If so, label this qubit as `depolarized' and as `non-depolarized' otherwise. \label{step:3}

 Compute a generating set $G$ for $C (\langle M_b \rangle)$ by finding a basis for the nullspace of $\mathrm T_{M_b}\Lambda$ over $\mathbb Z_2$. 
 \label{step:4}
 
 For each non-depolarized qubit $i$, enumerate the set of qubits in its forward lightcone. Merge together all sets that intersect to form new nonintersecting sets $\{L_j\}_j$. \label{step:5}
 
 For each $L_j$, let $G_j \subseteq \hat{\mathsf{P}}_{|L_j|}$ be the set of Pauli operators defined only on $L_j$, which is formed by taking each element of $G$ and truncating away all qubits outside of $L_j$. \label{step:6}

  For each $L_j$, store the final state in the Pauli basis as $\Tr_{-L_j}(\mc C \circ \Pi_{M_b}( \rho)) = \frac{1}{2^{|L_j|}} \sum_{s \in \langle G_j \rangle} \Tr(\rho s) \mc C(s)$ and use this to compute any marginal of the output distribution on $L_j$ \footnotemark. Sample measurement outcomes by a standard sampling-to-computing reduction \cite{bremner_achieving_2017} \label{step:7}.

 For all qubits $x \not\in L_j$ for all $j$, output a uniformly random bit as their measurement outcome. \label{step:8}
 
\end{algorithm}
\end{samepage}
\footnotetext{the explicit expression for each marginal is in \cref{eq:outputprob}}
For convenience, we will call each set of qubits denoted by $L_j$ above as a \textit{non-depolarized island}.

\begin{lemma} \label{lemma:correctness} (Correctness of Algorithm \ref{alg:1})
    For $\tilde{\mc C}$ defined as in \cref{theorem:main}, let $q_{\tilde{\mc C}}(x)$ be the distribution over output bitstrings $x$ produced by Algorithm \ref{alg:1} on input state $\rho$. Then, $p_{\tilde{\mc C}} = q_{\tilde{\mc C}}$.
\end{lemma}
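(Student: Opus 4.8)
The plan is to first strip off the randomness over error configurations and then show that, conditioned on one configuration, the algorithm samples the conditional distribution exactly. By the stochastic reformulation of the noise in \cref{sec:stochastic}, $p_{\tilde{\mc C}}=\E_b\, p_{\tilde{\mc C}_b}$, and the ``with probability $\gamma$'' choice in Step~\ref{step:2} is exactly a draw $b\sim B$; since everything after that depends only on $b$ and produces some distribution $q_{\tilde{\mc C}_b}$, the algorithm's output is $\E_b\, q_{\tilde{\mc C}_b}$, so it suffices to fix $b$ and prove $q_{\tilde{\mc C}_b}=p_{\tilde{\mc C}_b}$. For fixed $b$, Steps~\ref{step:1}--\ref{step:2} build exactly the set $M_b$ from the error-propagation definition of \cref{sec:local} (one pair $\mc C_t^\dagger(X_i),\mc C_t^\dagger(Z_i)$ per sampled error), so $\tilde{\mc C}_b(\rho)=\mc C\circ\Pi_{M_b}(\rho)$. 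Expanding $\rho$ in the Pauli basis and applying \cref{lemma:survivingpaulis} term by term gives $\mc C\Pi_{M_b}(\rho)=2^{-n}\sum_{s\in\mathsf{C}(\langle M_b\rangle)}\Tr(s\rho)\,CsC^\dagger$, and by the stated correspondence between $\mathsf{C}(\langle M_b\rangle)$ and the nullspace of $\mathrm T_{M_b}\Lambda$, Step~\ref{step:4} returns a generating set $G$ of $\mathsf{C}(\langle M_b\rangle)$. Throughout, commutation, $\Pi_P$, and \cref{lemma:survivingpaulis} are insensitive to Pauli phases, so this part can be read in the symplectic ($\mathbb{Z}_2^{2n}$) picture, with the $\pm1$ sign of $CsC^\dagger$ and the real number $\Tr(s\rho)$ carried along separately.

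The argument then rests on two structural facts about $\mathsf{C}(\langle M_b\rangle)$. First, every $s$ in it is supported only on non-depolarized qubits (\cref{def:depolarization}): acting nontrivially on a depolarized qubit $i$ would force $s$ to anticommute with one of $X_i,Z_i\in\langle M_b\rangle$. Since every non-depolarized qubit lies in its own forward lightcone, hence in some island $L_j$, such $s$ acts as the identity on the set $R$ of qubits lying in no island. Second, and this is the crux, $\mathsf{C}(\langle M_b\rangle)$ is the internal direct product $\prod_j H_j$ over islands, where $H_j$ is its subgroup supported within $L_j$: each generator $P$ of $M_b$ has the form $\mc C_t^\dagger(X_i)$ or $\mc C_t^\dagger(Z_i)$, hence is supported in the depth-$t$ backward lightcone of qubit $i$; every qubit in that support has qubit $i$ in its forward lightcone, so any two non-depolarized qubits appearing there lie in one common island $L_{j(P)}$, which also contains $i$. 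Splitting the symplectic inner product of any $s\in\mathsf{C}(\langle M_b\rangle)$ with $P$ island by island, all blocks outside $L_{j(P)}$ vanish (there $P$ meets only depolarized qubits, where $s$ is trivial), so every truncation $s|_{L_j}$ commutes with all of $M_b$, i.e.\ $s|_{L_j}\in\mathsf{C}(\langle M_b\rangle)$; combined with triviality on $R$ this gives $s=\bigotimes_j s|_{L_j}\in\prod_j H_j$, and the reverse inclusion is clear. Since truncating to $L_j$ is, in the symplectic picture, the projection onto the $L_j$-coordinates, $\langle G_j\rangle=H_j$, so Step~\ref{step:6} captures $H_j$.

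Given the product structure, the output distribution factorizes. For $s=\bigotimes_j s^{(j)}$ ranging over $\prod_j H_j$, the forward lightcone of each $\operatorname{supp}(s^{(j)})$ is contained in $L_j$ and the islands are disjoint, so $CsC^\dagger$ is a tensor product of Paulis each supported within its island $L_j$ (and identity on $R$), with overall sign the product of the signs of the $C s^{(j)} C^\dagger$. Using $\rho=\bigotimes_q\rho_q$, $F_x=\bigotimes_q\mc U_q(\ketbra{x_q}{x_q})$, and the partition identity $2^{-n}=2^{-|R|}\prod_j 2^{-|L_j|}$, the quantity $p_{\tilde{\mc C}_b}(x)=\Tr(F_x\,\mc C\Pi_{M_b}(\rho))$ splits into $2^{-|R|}$ times a product over islands of $\Tr\big(F_x|_{L_j}\,\Tr_{-L_j}(\mc C\Pi_{M_b}(\rho))\big)$, and the same decomposition shows $\Tr_{-L_j}(\mc C\Pi_{M_b}(\rho))=2^{-|L_j|}\sum_{s\in\langle G_j\rangle}\Tr(\rho s)\,\mc C(s)$ --- exactly the expression Step~\ref{step:7} uses. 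Since $2^{-|R|}=\prod_{q\in R}\tfrac12$ is the probability that the uniform bits of Step~\ref{step:8} match $x|_R$, sampling each island independently via the exact sampling-to-computing reduction of \cite{bremner_achieving_2017} and the $R$-bits uniformly reproduces $p_{\tilde{\mc C}_b}(x)$ for all $x$, and averaging over $b$ yields $q_{\tilde{\mc C}}=p_{\tilde{\mc C}}$. The hard part will be the second structural fact: proving that the effectively depolarized qubits really do cut the centralizer of the error group into an island-wise direct product, which is precisely where the circuit's lightcone geometry --- relating the supports of the propagated errors to the forward lightcones that define the $L_j$ --- must be invoked; everything else is bookkeeping with \cref{lemma:survivingpaulis} and the Pauli expansion.
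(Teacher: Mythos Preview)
Your proposal is correct and follows essentially the same route as the paper: both hinge on the lightcone observation that every propagated error $\mc C_t^\dagger(X_i)$ or $\mc C_t^\dagger(Z_i)$ can touch non-depolarized qubits in at most one island $L_j$, which forces the centralizer to split and the output distribution to factorize across the $\{L_j\}$ with uniform bits on the remainder. The only real difference is one of presentation --- the paper argues physically (depolarized qubits become $\eye/2$, gates acting on them drop out, so the output state is unentangled across islands) and then verifies the Pauli formula, whereas you work purely in the symplectic/Pauli picture and prove the direct-product decomposition $\mathsf{C}(\langle M_b\rangle)=\prod_j H_j$ directly; your version is arguably more explicit about why $\langle G_j\rangle$ contains no spurious elements, a point the paper handles in one sentence.
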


\begin{proof}
    
First recall that by \cref{eq:xiziimpliesdepol}, each depolarized qubit can be replaced with the maximally mixed state $\eye/2$ at the beginning of the circuit. It then follows that any error or gate acting solely on maximally mixed qubits can be ignored since $\mc E(\eye/2) = \eye/2$ and $\mc U (\frac{\eye}{2} \otimes \frac{\eye}{2}) = \frac{\eye}{2} \otimes \frac{\eye}{2}$. Removing these from the circuit only leaves the errors and gates that are within the lightcone of a non-depolarized qubit. In other words, each remaining error and gate is contained within one of the sets in $\{L_j\}_j$ which are each disjoint. Crucially, this implies that the output state is unentangled across these sets of qubits and so each set can be simulated independently. Furthermore the qubits outside of any of these sets remain maximally mixed and so their measurement outcomes can be exactly simulated by uniformly random bits.

To simulate $\Tr_{-L_j}\mc C \circ \Pi_{M_b} (\rho)$, it is helpful to view the initial state in the Pauli basis as $\rho=\frac{1}{2^{n}} \sum_{s \in \hat{\mathsf{P}}_{n}} \Tr(\rho s) s$. Next, it can be shown that the only Pauli operators that pass through $\Tr_{-L_j}\mc C \circ \Pi_{M_b}$ are those in $\langle G_j \rangle$. This can be seen by first observing that only Pauli operators in $\mathsf{C}(\langle M_b \rangle)$ can pass through $\Pi_{M_b}$ by \cref{lemma:survivingpaulis}. This notably only includes Pauli operators with support on non-depolarized qubits. Additionally, any Pauli operator with support outside of $L_j$ will not pass through $\Tr_{-L_j} \mc C \circ \Pi_{M_b}$ since the lightcone of a non-depolarized qubit outside of $L_j$ is disjoint from $L_j$. Now, note that $\langle G_j \rangle$ clearly contains all elements in $\mathsf{C}(\langle M_b \rangle) \rangle$ that only have support in $L_j$, but it is not obvious that it does not also contain elements that are outside of $ \mathsf{C}(\langle M_b \rangle)$. It can be shown that this is also true by noting that each element of $M_b$ cannot have support on two non-depolarized qubits with non-intersecting lightcones. Therefore, each element in $\mathsf{C}(\langle M_b \rangle)$ will still commute with each element in $M_b$ even after replacing qubits outside of $L_j$ with identity. Thus, $\langle G_j \rangle$ exactly represents the Pauli operators that can pass through  $\Tr_{-L_j}\mc C \circ \Pi_{M_b}$ giving the equality: $\Tr_{-L_j}(\mc C \circ \Pi_{M}( \rho)) = \frac{1}{2^{|L_j|}} \sum_{s \in \langle G_j \rangle} \Tr(\rho s) \mc C(s)$.

We can now express the output probability of outcome $z$ as
\begin{align}
\label{eq:outputprob}
    \Tr(\otimes_{i\in L_j}\mc U_i(\ketbra{z}{z}) \Tr_{-L_j}\mc C \circ \Pi_{M_b}(\rho)) = \frac{1}{2^{|L_j|}} \sum_{s \in \langle G_j \rangle} \Tr(\rho s) \Tr(\otimes_{i\in L_j}\mc U_i(\ketbra{z}{z})\mc C(s))
\end{align}
\end{proof}

\subsection{Bounding the Number of Surviving Pauli Operators}
The primary runtime cost is due to Step \ref{step:7}, which we will show has a runtime that scales as $O(d^{3+D}n^4)$. This asymptotically dominates the runtime for finding a basis for the nullspace of a binary matrix over $\mathbb Z_2$ which can be done in $O(n^3)$ timesteps. 

The runtime to simulate qubits in the set $L_j$ is $O(d|L_j|^2|\langle G_j \rangle |)$. This is because it takes $O(d|L_j||\langle G_j \rangle |)$ timesteps to first evolve each Pauli operator through the Clifford circuit since there are $d|L_j|$ gates to apply and $|\langle G_j \rangle |$ operators to apply them on. Next, it takes $O(|L_j||\langle G_j \rangle |)$  to compute each output probability since there are $|\langle G_j \rangle |$ terms to sum in \cref{eq:outputprob} and $|L_j|$ trace products to compute for each term -- one for each qubit. Finally we must compute $|L_j|$ output probabilities for the sampling-to-computing reduction giving a total runtime of $O(d|L_j||\langle G_j \rangle |+ |L_j|^2|\langle G_j \rangle |) = O(d|L_j|^2|\langle G_j \rangle |)$. 

Next we bound $\E |\langle G_j \rangle |$. In the noiseless case (i.e. when $M_b$ is empty), the best bound we can get is $|\langle G_j\rangle | \leq 4^{|L_j|}$, since there are four possible single-qubit Pauli operators in each location. However, noise causes the expected number of `surviving' Pauli operators to decay with depth and noise strength. This is bounded in the following lemma, which is proven in \cref{appendix:counting}, and is obtained due to the \textit{unitarity} of the circuit.

\begin{lemma}  \label{lemma:num_surviving_paulis}
    Using the notation of Algorithm \ref{alg:1},
    \begin{align}
        \E|\langle G_j \rangle | \leq d e^{3(1-\gamma)^d |L_j|}
    \end{align}
\end{lemma}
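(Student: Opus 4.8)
The plan is to turn $\E|\langle G_j\rangle|$ into a union bound over Pauli operators, evaluate each operator's survival probability exactly, and then control the resulting sum with a dimension-counting argument that plays the role of ``unitarity'' here. Recall from the proof of \cref{lemma:correctness} that $\langle G_j\rangle$ is precisely the group of phaseless Paulis supported inside $L_j$ that commute with every element of $M_b$. Treating $L_j$ as fixed (the bookkeeping needed to justify this, given that $L_j$ is itself a function of the random $b$, is routine — one restricts attention to the subcircuit feeding $L_j$ — and I would dispatch it first), linearity of expectation over the $nd$ independent error variables gives
\begin{align}
    \E|\langle G_j\rangle| \;=\; \sum_{s:\ \operatorname{supp}(s)\subseteq L_j}\Pr_b\!\left[\,s \text{ commutes with every element of } M_b\,\right].
\end{align}
By the construction in Algorithm \ref{alg:1}, an activated error at position $(i,t)$ adds $\mc C_t^\dagger(X_i)$ and $\mc C_t^\dagger(Z_i)$ to $M_b$, and $s$ commutes with both of these iff $C_t s C_t^\dagger$ acts as the identity on qubit $i$. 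Hence $s$ survives the errors at layer $t$ unless some error is activated on a qubit where $C_t s C_t^\dagger$ is nontrivial, and by independence
\begin{align}
    \Pr_b[s\text{ commutes with }M_b] = (1-\gamma)^{N(s)}, \qquad N(s) := \sum_{t}\wt\!\big(C_t s C_t^\dagger\big),
\end{align}
the sum running over the $\Theta(d)$ noise layers. (This is where \cref{lemma:survivingpaulis} enters: only centralizer elements pass through $\Pi_{M_b}$.)

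Next I would extract from $N(s)$ a quantity depending only on the smallest weight attained along the circuit. Writing $w_t(s):=\wt(C_t s C_t^\dagger)$ and letting $t^\star(s)$ minimize $w_t(s)$, the fact that there are at least $d$ noise layers gives $N(s)\ge d\, w_{t^\star(s)}(s)$. Partitioning the union bound by the value of $t^\star(s)$ and then loosening each part to range over all $s$ supported in $L_j$,
\begin{align}
    \E|\langle G_j\rangle| \;\le\; \sum_{t}\ \sum_{s:\ \operatorname{supp}(s)\subseteq L_j}(1-\gamma)^{d\,\wt(C_t s C_t^\dagger)}.
\end{align}
For each fixed $t$, conjugation by $C_t$ is an invertible symplectic transformation, so in the binary symplectic picture the vectors of $\{C_t s C_t^\dagger : \operatorname{supp}(s)\subseteq L_j\}$ form a subspace $W_t\le\mathbb Z_2^{2n}$ of dimension exactly $2|L_j|$. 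The crux is the ``unitarity'' estimate: a subspace of dimension $2k$ cannot be squeezed into the strictly smaller set of low-weight Paulis, quantitatively $\sum_{u\in W}\lambda^{\wt(u)}\le(1+3\lambda)^{k}$ for any $\lambda\in[0,1]$, with equality for the subspace of all Paulis on $k$ fixed qubits. Granting this with $\lambda=(1-\gamma)^d$,
\begin{align}
    \sum_{s:\ \operatorname{supp}(s)\subseteq L_j}(1-\gamma)^{d\,\wt(C_t s C_t^\dagger)} \;=\; \sum_{u\in W_t}(1-\gamma)^{d\,\wt(u)} \;\le\; \big(1+3(1-\gamma)^d\big)^{|L_j|} \;\le\; e^{3(1-\gamma)^d|L_j|},
\end{align}
and summing over the $\Theta(d)$ values of $t$ gives $\E|\langle G_j\rangle|\le d\,e^{3(1-\gamma)^d|L_j|}$.

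The main obstacle is the subspace weight bound $\sum_{u\in W}\lambda^{\wt(u)}\le(1+3\lambda)^{\dim(W)/2}$. I would prove it by induction on $\dim W$: pick a qubit $q$ in the support of $W$, split $W$ according to its action on $q$ into $W_0$ (the part trivial on $q$, of dimension $\dim W-1$ or $\dim W-2$) together with one or three cosets of $W_0$; project off qubit $q$ — which is injective on $W_0$, hence preserves its dimension — apply the inductive hypothesis, and use the elementary inequality $1+\lambda\le\sqrt{1+3\lambda}$ (valid for $\lambda\in[0,1]$) to absorb the leftover factor. One subtle point inside this induction is that the nontrivial cosets project to \emph{cosets} rather than subspaces; this is handled by the (Fourier-analytic / MacWilliams-type) fact that for $\lambda\in[0,1]$ a linear code maximizes $\sum\lambda^{\wt}$ over all its cosets. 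Everything else — the per-Pauli survival probability, the $\min_t$ trick, and the final summation — is routine once this estimate is established.
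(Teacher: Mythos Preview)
Your proof is correct and shares the same skeleton as the paper's --- linearity of expectation, the exact survival probability $(1-\gamma)^{N(s)}$, the lower bound $N(s)\ge d\cdot\min_t\wt(C_tsC_t^\dagger)$, and a partition by the argmin timestep --- but you diverge at the per-timestep estimate. The paper observes (via the correctness lemma) that every $s\in\langle G_j\rangle$ is supported only on non-depolarized qubits and hence $C_tsC_t^\dagger$ stays inside $L_j$ for \emph{every} $t$; consequently the inner sum at timestep $t$ is bounded by the full weight enumerator of Paulis on $L_j$, giving $\sum_w\binom{|L_j|}{w}3^w\lambda^w=(1+3\lambda)^{|L_j|}$ with $\lambda=(1-\gamma)^d$ and no further work (this is the content of \cref{lemma:size_of_sw} combined with \cref{fact:survivalprob}). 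You instead relax to all Paulis supported in $L_j$ at time $0$, so your $W_t$ need not be supported in $L_j$, and you recover the same bound via the general inequality $\sum_{u\in W}\lambda^{\wt(u)}\le(1+3\lambda)^{\dim W/2}$ for arbitrary subspaces $W\le\mathbb Z_2^{2n}$. Your induction and the coset-leader step are sound (the single-block function $\lambda^{[\cdot\neq 0]}$ has nonnegative Fourier coefficients on $\mathbb Z_2^2$ for $\lambda\in[0,1]$, which is exactly what Poisson summation needs to show the linear code dominates its cosets). What the paper's route buys is simplicity: the lightcone containment $\langle G_j\rangle\subseteq S^{L_j}$ makes the subspace bound unnecessary. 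What your route buys is generality: the argument never uses that the evolved Paulis remain in $L_j$, so it would survive in settings where that containment fails.
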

\begin{corollary}\label{corollary:runtime_component}
    Suppose $T_j$ is the runtime required to sample from $\Tr_{-L_j}(\mc C \circ \Pi_{M_b}(\rho))$. Then, $\E[T_j] \leq cd^2 |L_j|^2 e^{3(1-\gamma)^d |L_j|}$ where $c$ is some constant representing the cost of applying Clifford unitaries on Pauli operators.
\end{corollary}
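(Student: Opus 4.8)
The plan is simply to combine the deterministic runtime count for Step~\ref{step:7} of Algorithm~\ref{alg:1} --- already carried out in the paragraph immediately preceding \cref{lemma:num_surviving_paulis} --- with the expectation bound of \cref{lemma:num_surviving_paulis}. For a fixed error configuration $b$, and hence a fixed set $L_j$ and fixed truncated generating set $G_j$, that count gives the pointwise bound
\[
T_j \;\le\; c\bigl(d|L_j|\,|\langle G_j\rangle| + |L_j|^2\,|\langle G_j\rangle|\bigr) \;\le\; c\,d\,|L_j|^2\,|\langle G_j\rangle|,
\]
where the first term is the cost of propagating all $|\langle G_j\rangle|$ Pauli operators through the $d|L_j|$ gates of $\mc C$ restricted to $L_j$, the second is the cost of evaluating the $|L_j|$ output marginals needed for the sampling-to-computing reduction via \cref{eq:outputprob}, and $c$ absorbs the per-gate cost of evolving a single Pauli operator by a single Clifford gate (this also dominates the $O(n^3)$ nullspace solve of Step~\ref{step:4}, so Step~\ref{step:7} is indeed the asymptotically leading contribution to $T_j$).

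Then I would take the conditional expectation given the partition $\{L_j\}_j$ --- equivalently, given which input qubits are depolarized --- under which $d|L_j|^2$ is deterministic. Applying \cref{lemma:num_surviving_paulis},
\[
\E[T_j \mid L_j] \;\le\; c\,d\,|L_j|^2\,\E\bigl[|\langle G_j\rangle| \,\big|\, L_j\bigr] \;\le\; c\,d\,|L_j|^2 \cdot d\,e^{3(1-\gamma)^d|L_j|} \;=\; c\,d^2\,|L_j|^2\,e^{3(1-\gamma)^d|L_j|},
\]
which is exactly the claimed bound after absorbing the $O(\cdot)$ constant into $c$. Reading the corollary as a bound conditioned on $L_j$ is forced by the fact that its right-hand side contains the random quantity $|L_j|$, and is consistent with how \cref{lemma:num_surviving_paulis} itself is stated.

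There is essentially no obstacle: the corollary is a two-line consequence of \cref{lemma:num_surviving_paulis}, and all the real work lives in that lemma. The only points requiring care are (i) confirming that Step~\ref{step:7} dominates $T_j$ (it does, since the nullspace computation is only $O(n^3)$ and everything else in the algorithm is linear-algebraic bookkeeping of comparable or smaller cost), and (ii) correctly attributing the $|L_j|^2$ --- rather than $|L_j|$ --- in the final bound to the $|L_j|$ separate marginal evaluations in the sampling-to-computing reduction, each of cost $O(|L_j|\,|\langle G_j\rangle|)$.
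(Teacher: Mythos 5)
Your proposal is correct and matches the paper's (implicit) argument exactly: the paper derives the pointwise bound $T_j \le O(d|L_j|^2|\langle G_j\rangle|)$ in the paragraph preceding \cref{lemma:num_surviving_paulis} and the corollary is then just that bound combined with $\E|\langle G_j\rangle| \le d\,e^{3(1-\gamma)^d|L_j|}$. Your observation that the expectation must be read as conditional on $L_j$ (since $|L_j|$ appears on the right-hand side) is a reasonable clarification and is consistent with how the corollary is used in the proof of \cref{theorem:main}, where the bound is applied as $\E[T_j \mid |L_j|=x]$ before summing against $\mathbf{P}(|L_j|=x)$.
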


In the case that there is no geometric restriction on the Clifford circuit, we obtain our result by assuming there is only one connected component $L_1$ which contains all qubits and applying the above bound for $d < O(\gamma^{-1} \log n)$. In order to improve this depth threshold to constant for the geometrically local case we also need to show that each $L_j$ is of size logarithmic in $n$ by taking advantage of percolation.

\subsection{Percolation into Small Components}

To simplify the analysis, we coarse-grain our lattice into sublattices of length $2d$ in each dimension where $d$ is the circuit depth and define a connectivity graph $G = (V,E)$ where each vertex represents a sublattice and each edge represents whether two sublattices are adjacent (including diagonally adjacent). Importantly, two qubits can have intersecting forward lightcones if and only if they are in adjacent sublattices because a lightcone can spread at most a Manhattan distance of $d$ and so it can never reach the lightcone of a qubit in a non-adjacent sublattice. Hereafter, we define a non-depolarized sublattice to mean a sublattice that contains at least one non-depolarized qubit. Our main goal is to show that there does not exist any large connected components of non-depolarized sublattices. 

To prove this, we show that if there exists a large connected component of non-depolarized sublattices then there exists a high-weight Pauli operator on these sublattices that `survives' the Pauli projection channels.
\begin{lemma}\label{lemma:existshighwtpauli}
     For any error configuration $b$, let $A \subseteq V$ be the largest connected component in $G$ such that each sublattice of $A$ has at least one non-depolarized qubit. There exists some $s \in \mathsf{C}(\langle M_b \rangle)$ with support in at least half of the sublattices in $A$ and no support outside of $A$.
\end{lemma}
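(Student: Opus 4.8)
The plan is to build the desired Pauli operator $s$ by a counting/dimension argument on the binary symplectic space restricted to $A$. Write $N$ for the number of sublattices in $A$ and recall each sublattice has side length $2d$, so the total number of qubits in $A$ is $m := N \cdot (2d)^D$. Let $\hat{\mathsf P}_A$ denote the phaseless Paulis supported entirely within $A$; its binary symplectic space has dimension $2m$. We want to exhibit a nonzero $s \in \hat{\mathsf P}_A$ that (i) lies in the centralizer $\mathsf C(\langle M_b\rangle)$ — equivalently, $v(s)$ is in the nullspace of $\mathrm T_{M_b}\Lambda$ — and (ii) has support touching at least $N/2$ of the sublattices of $A$.

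First I would bound the number of constraints that membership in the centralizer imposes on operators supported in $A$. The relevant generators of $\langle M_b\rangle$ are the propagated errors $\mc C_t^\dagger(X_i), \mc C_t^\dagger(Z_i)$; only those whose support intersects $A$ can fail to commute with an $s$ supported in $A$. Each such generator has support confined to a single backward lightcone, hence (by geometric locality and the coarse-graining into $2d$-blocks) to at most a constant number $c_D = 3^D$ of sublattices; moreover, by the key observation used already in the proof of \cref{lemma:correctness}, no element of $M_b$ straddles two non-depolarized qubits with disjoint lightcones. The subtle point is to control \emph{how many} such generators there are. Here I would invoke the definition of the connected component: if a sublattice in $A$ had ``too many'' independent propagated-error generators acting on it, those errors would already have depolarized all its qubits (forcing $X_i, Z_i \in \langle M_b\rangle$ for every qubit $i$ in it), contradicting its membership in the non-depolarized component $A$. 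More carefully, each non-depolarized sublattice contributes at most $2\cdot(2d)^D - 1$ independent centralizer constraints restricted to $A$ (one fewer than full depolarization on each of its qubits), and summing over the $N$ sublattices of $A$ gives at most $N(2(2d)^D - 1) < 2m - N$ constraints. Wait — I should instead argue per-qubit: a non-depolarized qubit $i$ has at most one of $X_i, Z_i$ (up to multiplication) in $\langle M_b\rangle$, so it contributes at most $m$ constraints in total across all of $A$ from the ``guaranteed'' generators, but I must also count the other propagated errors. The cleanest route: the rank of $\mathrm T_{M_b}$ restricted to coordinates in $A$ is at most $2m - N$, because a rank of $2m$ would force every qubit in $A$ (hence every sublattice) to be depolarized. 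Thus $\dim \mathsf C(\langle M_b\rangle)\big|_A \ge 2m - (2m - N) = N$.

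Next I would convert this dimension lower bound into a high-weight witness. Consider the linear map $\phi$ sending $v(s)$, for $s$ in the centralizer-restricted-to-$A$, to the vector in $\mathbb Z_2^N$ whose $j$-th coordinate records whether $s$ has any support in sublattice $j$ — except this is not linear, so instead I would use the following standard trick: for each sublattice $j$, the Paulis supported entirely \emph{outside} sublattice $j$ form a subspace of codimension $2(2d)^D$ in $\hat{\mathsf P}_A$; intersecting the $\ge N$-dimensional centralizer space with the ``support $\subseteq A\setminus B$'' subspaces for a collection $B$ of more than $N/2$ sublattices would require codimension $> (N/2)\cdot 2(2d)^D \ge N$ only if $(2d)^D \ge 1$, which holds, so such an intersection can still be nonzero — this is too weak. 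The right argument is a greedy/exchange one: start with the $\ge N$-dimensional centralizer space $V_A$; if some nonzero $s\in V_A$ already touches $\ge N/2$ sublattices we are done, otherwise every element touches $< N/2$ sublattices, and I would derive a contradiction by noting $V_A$ then embeds into the span of Paulis on fewer than $N/2$ sublattices across all of $V_A$...

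I expect \textbf{the main obstacle} to be exactly this last step: cleanly extracting a \emph{single} operator of support $\ge N/2$ from a subspace of dimension $\ge N$, rather than merely a \emph{set} of operators collectively covering $A$. The honest fix is probably a connectivity argument in the graph $G$ itself: since $A$ is \emph{connected}, and every qubit in $A$ has at least one surviving Pauli through it (non-depolarized $\Rightarrow$ not both $X_i,Z_i$ in $\langle M_b\rangle$, so some single-qubit Pauli on $i$ lies in $\mathsf C(\langle M_b\rangle)$ after truncation — and its full un-truncated version lies in $\mathsf C(\langle M_b\rangle)$ too, with support possibly spilling into neighboring sublattices of $A$ but, by the lightcone-disjointness fact from \cref{lemma:correctness}, never outside $A$), one can take products of such local survivors along a spanning tree of $A$ and use a parity/pigeonhole argument to ensure at least half the sublattices end up with nonzero support. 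I would make this precise by picking, for each sublattice $j\in A$, a survivor $s_j\in\mathsf C(\langle M_b\rangle)$ with $s_j$ nontrivial on $j$ and supported in $A$; then among the $2^N$ products $\prod_{j\in S} s_j$ over subsets $S$, an averaging argument shows some product is nontrivial on $\ge N/2$ sublattices — because each sublattice $k$ is ``flipped'' to nontrivial or back depending on $S$, and for a random $S$ the probability that $\prod_{j\in S}s_j$ is trivial on $k$ is at most $1/2$ as long as some $s_j$ is nontrivial on $k$, which holds (take $j=k$). Linearity of expectation then gives an expected $\ge N/2$ nontrivial sublattices, hence some $S$ achieves it. This product lies in $\mathsf C(\langle M_b\rangle)$ and is supported in $A$, completing the proof. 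The delicate verifications are that each $s_j$ genuinely has no support outside $A$ (lightcone disjointness across components) and that $\mathsf C(\langle M_b\rangle)$ is closed under the products used (immediate, as it is a group).
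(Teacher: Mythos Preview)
After the false starts with dimension counting, your final approach---pick one survivor $s_j$ per sublattice, form a uniformly random subset product $\prod_{j\in S}s_j$, and use linearity of expectation to find one touching at least $N/2$ sublattices---is exactly the paper's proof. Two of your ``delicate verifications'' are genuine gaps, however.

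First, the inference ``not both $X_i,Z_i\in\langle M_b\rangle$, so some Pauli nontrivial on $i$ lies in $\mathsf C(\langle M_b\rangle)$'' is not immediate: knowing $X_i\notin\langle M_b\rangle$ says nothing a priori about the centralizer. The paper closes this with the double-centralizer identity $\mathsf C(\mathsf C(G))=G$ for Pauli subgroups: if every element of $\mathsf C(\langle M_b\rangle)$ were trivial on qubit $i$, then adjoining $X_i,Z_i$ to $M_b$ would not shrink the centralizer, hence by double centralizer would not enlarge $\langle M_b\rangle$, forcing $X_i,Z_i\in\langle M_b\rangle$ and contradicting non-depolarization. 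Second, your claim that the survivor's support ``never [spills] outside $A$'' is not automatic. An element of $\mathsf C(\langle M_b\rangle)$ is indeed trivial on all depolarized qubits (it must commute with the $X_j,Z_j$ that lie in $\langle M_b\rangle$), but it may still be nontrivial on non-depolarized qubits in \emph{other} connected components. The paper handles this by first taking any $s_i'\in\mathsf C(\langle M_b\rangle)$ nontrivial on $i$, truncating it to $A$ to obtain $s_i$, and then re-verifying membership in the centralizer: any $P\in M_b$ that touches $A$ is confined by lightcones to $N(A)$, and $s_i,s_i'$ agree on $N(A)$ (both vanish on the depolarized boundary $N(A)\setminus A$), while any $P$ disjoint from $A$ trivially commutes with $s_i$.
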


By the contrapositive of \cref{lemma:existshighwtpauli}, if there is no Pauli operator with high weight on a large connected component of sublattices that survives the error channel, then there are no large connected components of sublattices that have non-depolarized qubits.
We show in \cref{lemma:no_large_component_paulis} of \cref{appendix:runtime} that when the depth exceeds some constant depth threshold $d_{local}^*$, it is possible to tightly bound the probability that there exists some Pauli operator with high-weight support on a connected component of size $k$, using proof techniques from percolation theory. Combining this with \cref{lemma:existshighwtpauli} gives us the following bounds on the size of connected components,
\begin{corollary} \label{corollary:prob_x}
Let $\{L_j\}_j$ be the partition of qubits into subsets in Alg. \ref{alg:1}.
\begin{align*}
			\mathbf{P}(\max_j |L_j| \geq x) &\leq nd \exp(-x/(2d)^D) \\
             \mathbf{P}(\max_j diam(L_j) \geq x) &\leq 2^{3^D}nd \exp(-\Omega(\frac{\gamma x}{3^DD}))
		\end{align*}
when $d > d_{local}^*$, where  $d_{local}^* = O(\gamma^{-1}3^{2D}+ \gamma^{-1}D\log(\gamma^{-1}D))$. $diam(L_j)$ denotes the maximum manhattan distance between two qubits in $L_j$.
\end{corollary}

\subsection{Computing the Runtime} \label{sec:runtime}

Now, we have the ingredients to prove the main theorem.

\begin{proof} (Proof of \cref{theorem:main} and \cref{corollary:monte-carlo})
    Without geometric locality, we assume there is one connected component $L_1$ with all qubits. By examining \cref{corollary:runtime_component}, one can see that when $d > O(\gamma^{-1}\log n)$, the expected runtime is polynomial in $n$. This proves the first part of \cref{theorem:main}.
    For the geometrically local case, we `split' the noise channels into two parts of strength roughly $~\gamma/2$. More specifically, define $\gamma' = 1-\sqrt{1-\gamma}$, and replace each noise depolarizing noise channel of strength $\gamma$ with two consecutive independent depolarizing channels of strength $\gamma'$. We use one set of these noise channels to induce percolation into small components, and another set of these noise channels to decay the number of surviving Pauli operators within each component. When $d > d_{local}^*$ (where $d_{local}^*$ is defined for $\gamma'$ rather than $\gamma$ \footnote{Abusing notation, we refer to this threshold as $d_{local}^*$ as well since it has the same asymptotic scaling.}), we can compute the expected runtime $T$ of our algorithm as a whole, using linearity of expectation, as follows.
    \begin{align}
        \E[T] &\leq \sum_{j} \E [T_j]\\
        &\leq  \sum_{j} \sum_{x = 0}^{n}  c d^2 x^2 e^{3(1-\gamma')^d x} \mathbf{P}(|L_j| = x) \\
        &\leq  n\sum_{x = 0}^{n}  c d^2 x^2 e^{3(1-\gamma')^d x} nd e^{-\frac{x}{(2d)^D}} \\
        &\leq c d^3n^4 \sum_{x = 0}^{n}  e^{\frac{x}{(2d)^D}(3(2d)^D(1-\gamma')^d-1)}\\
        &\leq c  d^3n^4\sum_{x = 0}^{n}  e^{-\frac{2x}{3(2d)^D}} \label{eq:use_lambert_bound}\\
        &\leq c d^3n^4 \frac{1}{1-e^{-\frac{2}{3(2d)^D}}} \\
        &= O( d^{3+D} n^4) 
    \end{align}
    where we have obtained \cref{eq:use_lambert_bound} by noting that $3(2d)^D(1-\gamma')^d < 1/3$ when $d > d_{local}^*$ and $D \geq 1$ using \cref{eq:lambert_bound} proved in \cref{appendix:runtime}. The second to last inequality uses the geometric sum equation, and the final inequality uses the fact that $e^{-a} \leq 1-\frac{a}{2}$ for $a \leq 1/2$. \cref{lemma:correctness} tells us that the sampling algorithm is exact, and together with our bounds on expected runtime, this completes the proof of the theorem. 

    We now prove \cref{corollary:monte-carlo}, which is the version of the algorithm more likely to be implemented in practice. By Markov's inequality, the runtime will exceed $\epsilon^{-1}E[T]$ with probability $< \epsilon$ over the possible error configurations. We can check whether such an adversarial error configuration is sampled at Step \ref{step:7} of Algorithm \ref{alg:1} (i.e. if $|\langle G_j \rangle|$ is too large). If this occurs, we can stop the algorithm and output a random bit string. The TVD error incurred in this case is at most $1$, which means the expected TVD error (when the expectation is taken over possible error configurations) is bounded by $\epsilon$. By standard arguments, this means the resultant distribution $q_{\tilde{\mc C}}$ $\epsilon$-approximates $p_{\tilde{\mc C}}$ in total variation distance \footnote{See \cite{rajakumar_polynomial-time_2024}, corollary 23, for a more explicit proof}.
\end{proof}

\section{Relation to Existing Circuit Classes}\label{sec:relation}

\subsection{Clifford-Magic and Conjugated Clifford Circuits}
\begin{definition}[Clifford-magic (CM) circuits \cite{yoganathan_quantum_2019}]
    Clifford-magic circuits are composed of Clifford gates applied to magic-state input. More specifically, the circuit can be denoted as $C = U_d \dots U_1$ where $U_i \in \mathsf{C}_k$. And the input state is $\rho = \ketbra{A}^{\otimes n}$ where $\ket{A} = \frac{1}{\sqrt{2}}(\ket{0} + e^{i\pi/4}\ket{1})$. The output is measured in the computational basis.
\end{definition}

\begin{definition}[Conjugated Clifford circuits (CCC) \cite{bouland}]
    A conjugated Clifford gate is a Clifford conjugated with an arbitrary single-qubit rotation gate $U$ on every qubit. Conjugated Clifford circuits (CCCs) are composed of such gates, where the input state is $\ketbra{0}{0}^{\otimes n}$ and the output is measured in the computational basis.
\end{definition}

Noisy Clifford-magic circuits trivially fall into the model of circuits we simulate. To see how noisy Conjugated Clifford circuits do as well, conjugate each depolarizing channel with $U$ (which leaves it unchanged). Then, each $U$ and $U^\dagger$ between gates and noise channels in the circuits can be cancelled out leaving a circuit of the form $\mc U^{\otimes n} \circ \tilde{\mc C} \circ \mc U^{\dagger \otimes n}$, where $\tilde{\mc C}$ is some noisy Clifford circuit which can be simulated under our model.

\subsection{IQP Circuits augmented with CNOTs}

With a few modifications, our techniques also work for simulating noisy IQP circuits with intermediate CNOT gates. Like the circuit classes in the previous section, these have also been proposed for near-term quantum advantage experiments \cite{hangleiter_fault-tolerant_2024,bluvstein_logical_2024}. We define this circuit family more formally below:

\begin{definition}[Instantaneous Quantum Polynomial Circuits augmented with CNOTs (IQP+CNOT) \cite{hangleiter_fault-tolerant_2024,bluvstein_logical_2024}]
    An IQP+CNOT circuit is a circuit composed of any gates that are diagonal in the computational basis along with CNOT gates, where state preparation and measurement is done in the Hadamard basis. 
\end{definition}

In the case of IQP+CNOT circuits, our results apply to a broader class of noise channels, which notably include depolarizing and dephasing channels \footnote{This is the same class of noise channels for which noisy IQP circuits without CNOT gates become classical simulatable due to \cite{rajakumar_polynomial-time_2024}}. We state this in the following theorem. 
\begin{theorem}\label{theorem:iqp+cnot}
The results of \cref{theorem:main} apply in the case that $\rho = \ketbra{+}^{\otimes n}$ and $\tilde{C}$ is composed of only diagonal gates and CNOT gates. Moreover, $\mc E$ can be any noise channel in the following class of single-qubit Pauli noise channels,
\begin{align}
\mc E(\rho) = (1-p_X -p_Y-p_Z) + p_X X \rho X + p_Y Y \rho Y + p_Z Z \rho Z
\end{align}
where $\gamma = p_Z + \min(p_Y,p_Z)$.
\end{theorem}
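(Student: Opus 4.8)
The plan is to reduce the noisy IQP+CNOT case to the setting already handled by \cref{theorem:main}, by isolating the ``$Z$-dephasing'' part of $\mc E$ and showing the remainder is harmless. Since state preparation and measurement are in the Hadamard basis, we may work with input $\ketbra{+}{+}^{\otimes n}$, gates of diagonal-plus-CNOT type, and $X$-basis POVM elements (conjugating by $H^{\otimes n}$ at the two ends if one prefers the computational-basis picture). The structural facts that make the preprocessing of \cref{sec:local} go through are: (i) CNOT is Clifford and maps $\mathsf{P}^Z_n$ to $\mathsf{P}^Z_n$; (ii) any diagonal gate commutes with every element of $\mathsf{P}^Z_n$. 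Hence a $Z$-type error, propagated to the front of the circuit, remains a $Z$-type Pauli, and its propagation depends only on the CNOT structure, not on the diagonal-gate parameters. Consequently $M_b \subseteq \mathsf{P}^Z_n$ throughout, the analogue of \cref{lemma:survivingpaulis} shows that the surviving input Paulis are precisely the $X$-type Paulis commuting with $M_b$, and --- because $\ketbra{+}{+}$ has no $Z$- or $Y$-component --- a single generator $Z_i \in \langle M_b\rangle$ already forces qubit $i$ to become maximally mixed. So one replaces \cref{def:depolarization} by: qubit $i$ is depolarized iff $Z_i \in \langle M_b\rangle$ (rather than requiring both $X_i$ and $Z_i$).

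The genuinely new ingredient is the handling of the non-$Z$ part of $\mc E$. Using the stochastic viewpoint of \cref{sec:stochastic}, I would decompose $\mc E$ as a probabilistic mixture in which, with probability $\gamma = p_Z + \min(p_Y,p_Z)$ (this rate being what an optimized such decomposition gives, after the change of frame coming from the Hadamard-basis ends), a genuine $Z$-\emph{dephasing} channel $\Pi_Z$ is applied --- obtained either directly from the $\mathcal Z$-weight of $\mc E$ or by pairing the $\mathcal X$- and $\mathcal Y$-weights via $\Pi_Z\circ\mathcal X = \tfrac12(\mathcal X + \mathcal Y)$ --- while the residual weight consists only of $X$- and $Y$-type Pauli errors. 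Each $\Pi_Z$ event is propagated to the front and added to $M_b$ exactly as in \cref{alg:1}, except that only $\mc C_t^\dagger(Z_i)$ (not $\mc C_t^\dagger(X_i)$) is added. Each residual $X$- or $Y$-type error is instead pushed toward the input; passing such an error through a diagonal gate only relabels that gate (a diagonal gate conjugated by a bit-flip is again diagonal), so it arrives at $\ketbra{+}{+}^{\otimes n}$ as an $X$- or $Y$-type Pauli that either acts trivially (the $X$-type part) or merely flips some $\ket{+}$'s to $\ket{-}$'s (the $Y$-type part) --- that is, it only changes the input product state, which \cref{alg:1} already accommodates. Crucially, the relabelings never touch the CNOT structure and $Z$-type Paulis commute with diagonal gates regardless of parameters, so $M_b$ is unaffected; only the product state and the (still diagonal-plus-CNOT) circuit handed to the per-component step change.

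With this in place the rest of the argument is essentially unchanged. The percolation statement \cref{lemma:existshighwtpauli} holds verbatim, now with ``$s \in \mathsf{C}(\langle M_b\rangle)$'' read as ``$s$ an $X$-type Pauli commuting with $M_b$'', and the counting bound \cref{lemma:num_surviving_paulis} on $\E|\langle G_j\rangle|$ is unchanged, since it only uses that the surviving operators form a Pauli subgroup whose weight enumerator decays under the propagated dephasing. The one place that differs is \cref{step:7}: because diagonal gates are not Clifford, $\mc C(s)$ is no longer a single Pauli but a Pauli times a diagonal gate (push the $X$-type $s$ through, as above). In the geometrically local case each component $L_j$ has $|L_j| = O(\log n)$, so the marginals on $L_j$ are obtained by brute-force $2^{|L_j|} = O(\poly(n))$-dimensional linear algebra on the restricted diagonal-plus-CNOT circuit; in the all-to-all case one keeps the sparse Pauli-path representation and checks, using that the surviving paths are $X$-type and supported on the few non-depolarized qubits, that each required output marginal is again a sum of $O(\poly(n))$ explicitly computable terms. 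The runtime bookkeeping of \cref{sec:runtime} then carries over, with at most slightly larger polynomial factors.

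I expect the noise decomposition to be the main obstacle. One must choose it so that what is extracted is a genuine $\Pi_Z$ \emph{channel} --- a deterministic $Z$ Pauli would not depolarize --- that propagates cleanly as a $Z$-type Pauli, so that only $\Pi_Z$ (and not, e.g., $\Pi_Y$, which picks up relabeled diagonals and does not propagate to a Pauli) can be used; the residual must be entirely $X$/$Y$-type; and one wants the extracted rate to be as large as the claimed $\gamma = p_Z + \min(p_Y,p_Z)$. Pinning down the best constant here is the delicate part. A secondary point to confirm is that the per-component computation stays efficient under all-to-all connectivity once $\mc C(s)$ is a Pauli-times-diagonal object rather than a single Pauli.
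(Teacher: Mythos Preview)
Your reduction---extracting $\Pi_Z$ events to build $M_b\subseteq\mathsf{P}^Z_n$, redefining ``depolarized'' as $Z_i\in\langle M_b\rangle$, and observing that the percolation and counting lemmas carry over unchanged---matches the paper. A minor difference: the paper simulates the residual $X$-type errors \emph{in place} (they still map computational basis states to basis states, which is all that is needed below), whereas you push them toward the input by relabelling diagonals; your argument there has a small slip (a $Y$ on the target of a CNOT back-propagates to $Z\otimes Y$, not something purely $X/Y$-type) but the conclusion survives since any Pauli on $\ket{+}^{\otimes n}$ yields a product of $\ket{\pm}$.

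The real gap is the per-component simulation. In the all-to-all case you want to evaluate $\Tr(F_x\,\mc C(s))$ term by term, but for $s\in\mathsf{P}^X_n$ the operator $CsC^\dagger$ is a diagonal unitary times an $X$-type Pauli, and its overlap with a Hadamard-basis projector is a sum over $2^n$ phases with no structure to exploit---so ``explicitly computable'' fails. In the geometrically local case, brute-forcing each $L_j$ costs $2^{|L_j|}$, and the percolation tail $\mathbf P(|L_j|=x)\le nd\,e^{-x/(2d)^D}$ is far too weak to bound $\E[2^{|L_j|}]$ (the geometric series diverges for every $d\ge 1$), so you lose the Las Vegas guarantee of \cref{theorem:main} and only get a Monte Carlo sampler with $d$-dependent exponent $(n/\epsilon)^{O((2d)^D)}$, not the claimed $O(d^{3+D}n^4)$. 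The paper supplies the missing idea in \cref{lemma:iqp_converter}: by Gaussian elimination on the tableau of $G_j$ one rewrites the post-noise input $\frac{1}{2^{|L_j|}}\sum_{s\in\langle G_j\rangle}s$ as a uniform mixture $\E_r\ketbra{\psi_r}$ where each $\ket{\psi_r}$ is a superposition of only $|\langle G_j\rangle|$ \emph{computational} basis states. One then samples $r$ and evolves $C\ket{\psi_r}$ gate by gate in the computational basis---diagonal gates, CNOTs, and the in-place $X$ insertions never increase the number of basis kets---so the per-component cost is $O(\poly(|L_j|)\cdot|\langle G_j\rangle|)$ exactly as in \cref{corollary:runtime_component}, and the runtime analysis of \cref{sec:runtime} goes through unchanged.
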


Once again our strategy is to simulate mid-circuit errors by propagating them to the beginning of the circuit, but in this case we only propagate the $Z$ component of the error. These remain as $Z$ errors on the input since they commute with diagonal gates and are mapped to other $Z$ errors by the CNOT gates. Since $X$ errors do not commute with diagonal gates, we will not propagate them to the beginning, and instead stochastically simulate them mid circuit.
Notice that the projection channel resulting from these propagated errors is exactly the same as if we had replaced each IQP gate with identity and propagated both $X$ and $Z$ errors to the beginning as done previously. In this modified circuit, $Z$ errors would propagate in the same way as the original IQP+CNOT circuit and $X$ errors would propagate to other $X$ errors which would then act trivially on the input since $\ketbra{+}^n$ is a $+1$ eigenstate of any $s \in \mathsf{P}^X_n$. 
Since replacing diagonal gates with identity makes the circuit entirely Clifford, all of our previous analysis can be applied to show that many of the qubits become depolarized by this propagated error channel acting on the input. Intuitively, percolation of depolarizing errors still occurs even when only considering $Z$ errors because only a single $Z$ error is needed to depolarize an input qubit (i.e. $\Pi_{Z}(\ketbra{+}) = \frac{\eye}{2}$). We leave the full proof to \cref{appendix:iqpcnot}.


\section*{Appendix}

\appendix

\section{Counting Arguments for Pauli Paths in Clifford Circuits} \label{appendix:counting}

Note that since $\{X,Y,Z\}$ have trace zero, for any $s \in \mathsf{P}_n$, we have the property that $\mc E^{\otimes n}(s) = (1-\gamma)^{|s|}s$ where $|s|$ is the number of non-identities in $s$. This convenient property motivates studying the evolution under the noisy circuit in the Pauli basis. Similar techniques were developed and used in a large body of existing literature (\cite{bremner_achieving_2017,gao_efficient_2018,aharonov_polynomial-time_2023,schuster_polynomial-time_2024,fontana2023classicalsimulationsnoisyvariational,mele2024noiseinducedshallowcircuitsabsence} are examples and maybe an incomplete list). We refer to this broad class of techniques as the Pauli path framework. One of our technical contributions is to reframe these techniques under the application of stochastic processes rather than under the application of a deterministic CPTP map (in earlier work, the probability bound of \cref{eq:simple_survival_prob} is treated as a decay factor for the associated Pauli path coefficient). In the setting of Clifford circuits, the analysis can be greatly simplified, and for any $s \in \mathsf{P}_n$, we have the following fact,
\begin{align} \label{eq:simple_survival_prob}
         \mathbf{P}_b (\tilde{\mc C}_b (s) \neq 0) = (1-\gamma)^{\sum_i | \mc C_i(s)|} 
\end{align}
This can be seen by applying $\mc E^{\otimes n}$ on $\mc C_i(s)$ at each layer.

One fact that is unique to Clifford circuits is that any Pauli operator $s\in \mathsf{P}_n$ cannot `split' into different Pauli paths during the circuit (as is the case with general quantum circuits, e.g. random quantum circuits). It will be useful to examine the point in the circuit at which an input Pauli operator is at its minimum weight during its path through the Clifford circuit. We define this below,
\begin{definition}
    Given any set of qubits $A$, let $S_w^A(\mc C, G) \subseteq \hat{\mathsf{P}}_n$ be the set $S_w^A(\mc C, G):= \{s \in G : \min_i \mc C_i(s) = w \text{, and } \forall_i \text{ } \mc C_i(s) \text{ is supported only in $A$}\}$. When $A$ is not defined, we take it to be the set of all qubits. When $w$ is not defined, we take it to include all $w: 0 \leq w \leq |A|$. When $\mc C$ is not defined we take it to be the entire circuit, which can be inferred from context. When $G$ is not defined we take it to be $\hat{\mathsf{P}}_n$.
\end{definition}
By applying \cref{eq:simple_survival_prob}, we have that,
\begin{fact} \label{fact:survivalprob}
    For $s \in S_w^A$ and noisy Clifford circuit $\tilde{\mc C}$ of depth $d$
    \begin{align} 
    \mathbf{P}_b (\tilde{\mc C}_b(s) \neq 0) = (1-\gamma)^{\sum_i |\mc C_i(s)|} \leq (1-\gamma)^{dw}
\end{align}
\end{fact}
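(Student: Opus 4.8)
The plan is to read off the first equality directly from \cref{eq:simple_survival_prob} and then bound the exponent using the defining property of $S_w^A$. For the equality: under the stochastic view of the noise (\cref{sec:stochastic}), a depolarizing error landing on qubit $j$ in the noise layer at timestep $i$ annihilates the propagated Pauli operator exactly when $\mc C_i(s)$ acts non-trivially on qubit $j$ — since $\Tr(X)=\Tr(Y)=\Tr(Z)=0$ — and acts as the identity on it otherwise. The error events are mutually independent, and $s$ survives the whole circuit precisely when, at each timestep $i$, none of the $|\mc C_i(s)|$ errors sitting on the support of $\mc C_i(s)$ fire. Multiplying these independent $(1-\gamma)$ factors gives $\mathbf{P}_b(\tilde{\mc C}_b(s)\neq 0) = \prod_i (1-\gamma)^{|\mc C_i(s)|} = (1-\gamma)^{\sum_i |\mc C_i(s)|}$, which is exactly \cref{eq:simple_survival_prob}.

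For the inequality: if $s \in S_w^A$ then by definition $\min_i |\mc C_i(s)| = w$, so $|\mc C_i(s)| \ge w$ at every timestep appearing in the sum. Summing over the $d$ timesteps of the depth-$d$ circuit yields $\sum_i |\mc C_i(s)| \ge dw$. Since $0 \le 1-\gamma \le 1$, the function $x \mapsto (1-\gamma)^x$ is non-increasing, so $(1-\gamma)^{\sum_i |\mc C_i(s)|} \le (1-\gamma)^{dw}$, which completes the argument. Note that the support-in-$A$ clause in the definition of $S_w^A$ plays no role here; only the uniform layer-wise weight lower bound is needed.

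There is essentially no serious obstacle: the statement is a one-line corollary of \cref{eq:simple_survival_prob}. The only point requiring a modicum of care is the bookkeeping of how many noise layers contribute to $\sum_i |\mc C_i(s)|$ — in particular whether one includes the input noise layer corresponding to $\mc C_0 = \eye$ — but under either convention there are at least $d$ contributing layers, each contributing at least $w$, so the bound $(1-\gamma)^{dw}$ is valid regardless.
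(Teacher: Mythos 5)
Your proof is correct and takes the same route as the paper: the equality is precisely \cref{eq:simple_survival_prob} (the paper gives it the same brief stochastic-error justification you spell out), and the inequality follows by bounding every $|\mc C_i(s)|$ below by $w$ and noting $(1-\gamma)^x$ is non-increasing. Your remark about the bookkeeping of noise layers (and that it is harmless for the lower bound $\sum_i |\mc C_i(s)| \ge dw$) is a reasonable observation that the paper leaves implicit.
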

Crucially, this probability decays exponentially with depth and weight. Now, we can also bound the size of $S_w$ as follows
\begin{lemma} \label{lemma:size_of_sw}
    \begin{align}
        |S_w^A| \leq d {|A| \choose w}3^w
    \end{align}
\end{lemma}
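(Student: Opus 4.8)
The plan is to exploit the invertibility of Clifford conjugation together with a union bound over the $d$ timesteps. The key observation is that for each $i \in \{1,\dots,d\}$, the map $s \mapsto \mc C_i(s) = C_i s C_i^{\dagger}$ is a bijection on $\hat{\mathsf{P}}_n$, with inverse $P \mapsto C_i^{\dagger} P C_i$, since it is conjugation by a fixed Clifford unitary. This is exactly the ``no splitting'' property of Clifford circuits already noted in the text: an input Pauli operator stays a single Pauli operator at every layer.

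First I would fix $s \in S_w^A$ and let $i^*(s)$ be the smallest timestep at which $|\mc C_{i^*(s)}(s)| = w$; such a timestep exists precisely because $w = \min_i |\mc C_i(s)|$ by the definition of $S_w^A$. Write $P_s := \mc C_{i^*(s)}(s)$. By the defining conditions of $S_w^A$, the operator $P_s$ has weight exactly $w$ and is supported only on qubits in $A$. The number of phaseless Pauli operators of weight exactly $w$ with support inside $A$ is precisely $\binom{|A|}{w} 3^w$: one chooses the set of $w$ qubits in $A$ on which the operator acts nontrivially, and then one of $\{X,Y,Z\}$ on each of those qubits.

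Next I would define the map $\Phi : S_w^A \to \{1,\dots,d\} \times \{P \in \hat{\mathsf{P}}_n : |P| = w,\ P \text{ supported in } A\}$ by $\Phi(s) = (i^*(s), P_s)$ and argue that it is injective: given any pair $(i,P)$ in the image, the preimage is the unique operator $s = C_i^{\dagger} P C_i$, by invertibility of Clifford conjugation. It then follows immediately that $|S_w^A| \le d \cdot \binom{|A|}{w} 3^w$, which is the claimed bound.

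There is no substantial obstacle here; the only points requiring a little care are to commit to a canonical choice of $i^*(s)$ (e.g.\ the first timestep achieving the minimum weight) so that $\Phi$ is well-defined, and to keep the timestep index ranging over exactly the $d$ layers $\mc C_1, \dots, \mc C_d$ so that the leading factor comes out as $d$ rather than $d+1$; including the bare input $s$ as an extra ``timestep'' would only weaken the bound by that harmless additive constant.
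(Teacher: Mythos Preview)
Your proposal is correct and follows essentially the same approach as the paper: both construct an injection from $S_w^A$ into $\{1,\dots,d\}\times\{\text{weight-}w\text{ phaseless Paulis supported in }A\}$ via the invertibility of Clifford conjugation at each layer, then count the codomain. One minor technicality you skim over (and the paper handles explicitly) is that $C_i s C_i^\dagger$ can carry a $\pm 1$ sign, so $\mc C_i$ is only a bijection on $\hat{\mathsf{P}}_n$ \emph{modulo phase}; defining $P_s$ as the phaseless representative of $\mc C_{i^*(s)}(s)$ fixes this without affecting injectivity or the count.
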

\begin{proof}
    Note, for each $s \in S_w^A$, there exists some $i$ at which $|\mc C_i(s)|=w$ and its support is contained within $A$. Thus, for all $s \in S_w^A$,  the set of all possible weight-$w$ Pauli operators that are contained within $A$ contains $\mc C_i(s)$ for some $i$. By \textit{unitarity}, each such $\mc C_i(s)$ corresponds to exactly one input Pauli $s$. Therefore we can upper bound $|S_w^A|$ by counting all weight-$w$ Pauli operators contained in $A$ for each depth. We can ignore phases when performing this counting since $S_w^A \subseteq \hat{\mathsf{P}}_n$ and for each $P \in \hat{\mathsf{P}}_n$ there is exactly one phase $\ell \in \{0,1,2,3\}$ such that $\mc C_i^\dagger(i^\ell P) \in \hat{\mathsf{P}}_n$. The counting argument proceeds as follows. First, there are $d$ ways to choose the timestep $i$ at which $|\mc C(i)| = w$. Next, there are ${|A| \choose w}$ ways to choose the locations of $w$ non-identities, and $3^w$ ways to assign those locations to single-qubit Pauli operators (X,Y, or Z).
\end{proof}

This allows us to obtain the following bound:

\begin{lemma} (Restatement of \cref{lemma:num_surviving_paulis})
    Using the notation of Algorithm \ref{alg:1},  we have
    \begin{align}
        \E |\langle G_j \rangle | \leq d e^{3(1-\gamma)^d |L_j|}
    \end{align}
\end{lemma}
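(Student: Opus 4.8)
The plan is to match $\langle G_j \rangle$ against the set of Pauli operators whose entire trajectory through $\mc C$ stays inside the cluster $L_j$ and which are not annihilated by the sampled error configuration $b$, and then to bound the expected size of that set using \cref{lemma:size_of_sw} together with \cref{fact:survivalprob}, finishing with a binomial sum and the inequality $1+x\le e^x$.

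First I would invoke the description of $\langle G_j\rangle$ already extracted in the proof of \cref{lemma:correctness}: an element $s\in\langle G_j\rangle$ is a Pauli operator that survives $\Pi_{M_b}$ — equivalently $s\in\mathsf{C}(\langle M_b\rangle)$, i.e.\ $\tilde{\mc C}_b(s)\neq 0$ — and is supported only on the non-depolarized ``root'' qubits of cluster $j$. For such an $s$, every $\mc C_i(s)$ lies in the forward lightcone of those roots, which is exactly $L_j$; hence $s\in S_w^{L_j}$ with $w=\min_i|\mc C_i(s)|$, and $\tilde{\mc C}_b(s)\neq 0$. Therefore
\begin{align}
  |\langle G_j\rangle| \;\le\; \sum_{w=0}^{|L_j|} \#\bigl\{\, s\in S_w^{L_j} : \tilde{\mc C}_b(s)\neq 0 \,\bigr\}.
\end{align}
Taking expectations over $b$, and using $\mathbf{P}_b(\tilde{\mc C}_b(s)\neq 0)\le(1-\gamma)^{dw}$ for $s\in S_w^{L_j}$ from \cref{fact:survivalprob} and $|S_w^{L_j}|\le d\binom{|L_j|}{w}3^w$ from \cref{lemma:size_of_sw}, the binomial series collapses:
\begin{align}
  \E|\langle G_j\rangle| \;\le\; \sum_{w=0}^{|L_j|} |S_w^{L_j}|\,(1-\gamma)^{dw} \;\le\; d\sum_{w=0}^{|L_j|}\binom{|L_j|}{w}\bigl(3(1-\gamma)^d\bigr)^w \;=\; d\bigl(1+3(1-\gamma)^d\bigr)^{|L_j|} \;\le\; d\,e^{3(1-\gamma)^d|L_j|}.
\end{align}

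The step I expect to be the main obstacle is the exchange of expectation with the restriction to $S^{L_j}$ in the display above: the cluster $L_j$ is not fixed but is itself a function of the same random configuration $b$ that governs which Pauli operators survive, so the manipulation is not valid as written. To make it rigorous I would condition on the full depolarization pattern (equivalently, on the partition $\{L_j\}_j$), so that each $S^{L_j}$ becomes a deterministic set, and then verify that this conditioning does not increase the survival probability of any $s\in S_w^{L_j}$ beyond $(1-\gamma)^{\sum_i|\mc C_i(s)|}$ — survival is a monotone (decreasing) event in $b$, and the error locations along $s$'s path lie inside $L_j$, which one can exploit to rule out the bad direction of correlation. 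An equivalent and perhaps cleaner route is to first prove the fixed-region bound $\E_b \#\{\, s\in S_w^{A} : \tilde{\mc C}_b(s)\neq 0 \,\}\le d\,e^{3(1-\gamma)^d|A|}$ for every deterministic set of qubits $A$, and then transfer it to $A=L_j$ after conditioning; every remaining step is the routine binomial estimate displayed above.
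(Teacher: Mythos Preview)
Your proposal is correct and follows essentially the same route as the paper: identify $\langle G_j\rangle$ with $\{s\in S^{L_j}:\tilde{\mc C}_b(s)\neq 0\}$, bound termwise by $\sum_w |S_w^{L_j}|\,(1-\gamma)^{dw}$ via \cref{fact:survivalprob} and \cref{lemma:size_of_sw}, and close with the binomial theorem and $1+x\le e^x$. You are in fact more careful than the paper on the point you flag as the main obstacle---the paper simply writes $\E_b|\{s\in S^{L_j}:\tilde{\mc C}_b(s)\neq 0\}|\le \sum_{s\in S^{L_j}}\mathbf{P}_b(\tilde{\mc C}_b(s)\neq 0)$ without commenting on the dependence of $L_j$ on $b$.
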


\begin{proof}
Note that once we have sampled some $b$, $\langle G_j \rangle = S^{L_j} \cap \mathsf{C}(\langle M_b \rangle)$. Moreover, this is equivalent to the set $\{s \in S^{L_j} : \tilde{\mc C}_b (s) \neq 0\}$. Now, applying linearity of expectation, we have the following,
    \begin{align}
        \E |\langle G_j \rangle | &= \E_b |\{s \in S^{L_j} : \tilde{\mc C}_b (s) \neq 0\}| \\
        &\leq \sum_{s \in S^{L_j}} \mathbf{P}_b( \tilde{\mc C}_b(s) \neq 0) \\
        &\leq \sum_{w=0}^{|L_j|}\sum_{s \in S^{L_j}_w} \mathbf{P}_b( \tilde{\mc C}_b (s) \neq 0) \\ 
        &\leq \sum_{w=0}^{|L_j|} d{|L_j| \choose w}3^w (1-\gamma)^{wd} &\text{by \cref{fact:survivalprob} and \cref{lemma:size_of_sw}} \\
        &\leq d (1+3(1-\gamma)^{d})^{|L_j|} &\text{by binomial theorem}\\
        &\leq d \exp(3(1-\gamma)^d |L_j|)
    \end{align}
\end{proof}

\section{Depth Threshold for Percolation} \label{appendix:runtime}

A useful fact that we will soon make use of is that for a Pauli subgroup $G \leqslant \hat{\mathsf{P}}_n$ 
\begin{align}
\label{eq:ccg}
    \mathsf{C}(\mathsf{C}(G)) = G
\end{align}

Although, we do not include a proof here, this can be seen by mapping from phaseless Pauli subgroups to symplectic subspaces where the equivalent statement is shown in Theorem 11.8 of \cite{Roman2008}.

\begin{lemma} (Restatement of \cref{lemma:existshighwtpauli})
For any error configuration $b$, let $A \subseteq V$ be the largest connected component in $G$ such that each sublattice of $A$ has at least one non-depolarized qubit. There exists some $s \in \mathsf{C}(\langle M_b \rangle)$ with support in at least half of the sublattices in $A$ and no support outside of $A$.
\end{lemma}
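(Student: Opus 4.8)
The plan is to realize the desired operator inside $V \defeq \mathsf{C}(\langle M_b\rangle)\cap\mathcal{P}_A$, where $\mathcal{P}_A\leqslant\hat{\mathsf{P}}_n$ denotes the phaseless Paulis whose support lies entirely within the qubits of the sublattices of $A$. Any $s\in V$ automatically satisfies two of the three requirements — it lies in $\mathsf{C}(\langle M_b\rangle)$ and has no support outside $A$ — so it suffices to exhibit $s\in V$ acting nontrivially on at least half of the $m\defeq|A|$ sublattices of $A$. For $L\in A$ write $\rho_L$ for the truncation map restricting a Pauli to the qubits of $L$, and say $s$ \emph{covers} $L$ if $\rho_L(s)\neq\eye$. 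Since an $s\in\mathcal{P}_A$ commutes with $g$ iff it commutes with the truncation $g|_A$, we have $V=\mathsf{C}_A(\langle M_b\rangle|_A)$, the centralizer taken inside $\mathcal{P}_A$ of the image $\langle M_b\rangle|_A$ of the error group under truncation; applying the double-centralizer identity~\eqref{eq:ccg} inside $\mathcal{P}_A$ gives $\mathsf{C}_A(V)=\langle M_b\rangle|_A$, so $V$ contains an element anticommuting with $X_q$ (which then has a $Y$ or $Z$ on $q$, hence covers the sublattice of $q$) precisely when $X_q\notin\langle M_b\rangle|_A$, and symmetrically for $Z_q$. Since each $L\in A$ contains a non-depolarized qubit $q$, for which $X_q\notin\langle M_b\rangle$ or $Z_q\notin\langle M_b\rangle$, the first step (i) of the proof reduces to the sharper claim that $X_q\notin\langle M_b\rangle|_A$ (resp.\ $Z_q$), which will give $\rho_L(V)\neq\{\eye\}$ for every $L\in A$.

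This sharpening is the crux, and it genuinely needs the geometry: a lone generator of the form $X_qP_w\in M_b$ would give $X_q=(X_qP_w)|_A$. The relevant fact is that the qubits in the support of a generator $\mc C_t^\dagger(X_i)$ (or $\mc C_t^\dagger(Z_i)$), $t\le d$, all have $i$ in their forward lightcone, hence have pairwise-intersecting lightcones, hence lie in a set of sublattices that are pairwise equal or adjacent in $G$. If such a clique of sublattices meets $A$, then any non-depolarized sublattice in it is adjacent to $A$ and therefore lies in $A$ by maximality of the connected component; consequently the support of any generator outside $A$ consists solely of depolarized qubits. Now suppose $X_q=g|_A$ with $g\in\langle M_b\rangle$. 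Decompose $g$ into generators, discard those whose clique is disjoint from $A$ (they restrict to $\eye$ on $A$), and call the product of the rest $\tilde g\in\langle M_b\rangle$, so $\tilde g|_A=X_q$ and every generator in it meets $A$. Multiply each such generator by the single-qubit Paulis equal to its action on its depolarized qubits outside $A$; since those qubits are depolarized, each such single-qubit Pauli lies in $\langle M_b\rangle$, so the result is a generator supported entirely within $A$ that differs from the original only outside $A$. The product $g'$ of these cleaned generators then lies in $\langle M_b\rangle$, is supported within $A$, and — since the modifications occurred only outside $A$ and $\rho_A$ is a homomorphism — has $g'|_A=\tilde g|_A=X_q$; being supported in $A$, $g'=X_q$, contradicting that $q$ is non-depolarized. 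This establishes (i).

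For step (ii), fix $L\in A$: since $\rho_L\colon V\to\mathcal{P}_L$ is $\mathbb{Z}_2$-linear, a uniformly random $s\in V$ has $\rho_L(s)$ distributed uniformly over the image $\rho_L(V)$, which is nonzero by (i), whence $\mathbf{P}[\rho_L(s)\neq\eye]=1-|\rho_L(V)|^{-1}\ge 1/2$. Therefore $\E\big[\#\{L\in A:\rho_L(s)\neq\eye\}\big]=\sum_{L\in A}\mathbf{P}[\rho_L(s)\neq\eye]\ge m/2$, so some $s\in V$ covers at least half the sublattices of $A$; this $s$ lies in $\mathsf{C}(\langle M_b\rangle)$, has no support outside $A$, and meets at least $\lceil m/2\rceil$ sublattices of $A$, as required. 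I expect the only real difficulty to be the cleaning argument in step (i): the naive implication ``$X_q\notin\langle M_b\rangle\Rightarrow X_q\notin\langle M_b\rangle|_A$'' is false in general, and making it hold requires simultaneously using the locality of the propagated errors (support of a generator confined to a clique of sublattices) and the fact that $A$ is a \emph{maximal} connected component of non-depolarized sublattices; the remaining steps are routine linear algebra over $\mathbb{Z}_2$.
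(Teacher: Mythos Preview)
Your proof is correct, and its overall architecture matches the paper's: first show that for each sublattice $L\in A$ there is an element of $V=\mathsf{C}(\langle M_b\rangle)\cap\mathcal{P}_A$ covering $L$, then average to get one element covering at least half of $A$. The difference is in how the first step is executed. The paper works on the \emph{centralizer} side: it obtains $s_i'\in\mathsf{C}(\langle M_b\rangle)$ with support on the non-depolarized qubit $i$ via double-centralizer in $\hat{\mathsf{P}}_n$, then truncates $s_i'$ to $A$ and checks directly that the truncation still commutes with every $P\in M_b$ (using that $s_i'$ is already identity on $N(A)\setminus A$, since those qubits are depolarized, and that any generator touching $A$ is contained in $N(A)$). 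You instead work on the \emph{error-group} side: you identify $V=\mathsf{C}_A(\langle M_b\rangle|_A)$, apply double-centralizer inside $\mathcal{P}_A$, and then prove $X_q\notin\langle M_b\rangle|_A$ by a cleaning argument that multiplies each $A$-touching generator by single-qubit Paulis on its depolarized qubits outside $A$. Both routes use exactly the same geometric inputs (locality of propagated generators to a clique of sublattices, and maximality of $A$ forcing adjacent-but-outside sublattices to be fully depolarized); the paper's truncation argument is a bit shorter, while your cleaning argument makes the role of maximality more explicit. For the averaging step, the paper takes a random product $\prod_i s_i^{r_i}$ of the per-sublattice witnesses and uses a parity argument on a fixed qubit; your version---uniform $s\in V$ with $\rho_L(s)$ uniform over the nontrivial image $\rho_L(V)$---is a cleaner packaging of the same linearity-of-expectation idea.
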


\begin{proof}
    We start by showing that for each non-depolarized qubit $i$ in each sublattice of $A$ there is a Pauli operator $s_i$ such that 
    \begin{enumerate}
        \item $\Pi_{M_b}(s_i) \neq 0$
        \item $s_i$ has support on qubit $i$
        \item Its support is contained within $A$
    \end{enumerate}
    Note that by \cref{lemma:survivingpaulis}, (1) is equivalent to $s_i \in \mathsf{C}(\langle M_b \rangle)$.
    
    We first show there exists an $s_i$ with properties (1) and (2). For sake of contradiction assume there does not a exist an $s_i$ such that (1) and (2) are true. Next, add $X_i$ and $Z_i$ to $M_b$ to produce $M_b'$. Since every element of $\mathsf{C}(\langle M_b\rangle)$ has identity on qubit $i$ by assumption, they each will commute with $X_i$ and $Z_i$, and so $\mathsf{C}(\langle M_b \rangle)$ = $\mathsf{C}(\langle M_b' \rangle)$. This implies $\langle M_b \rangle = \langle M_b'\rangle$ (by \cref{eq:ccg}), which implies that $X_i,Z_i \in M_b$. By definition this means that qubit $i$ is depolarized which is a contradiction. Thus, (1) and (2) must be true. 
    
    Next, we show that there exists an $s_i$ that additionally has property (3). To do this we start with an $s_i'$ with properties (1) and (2), and construct $s_i$ by setting all qubits of $s_i'$ that are outside of $A$ to identity. $s_i$ now satisfies (2) and (3). We next show that it also satisfies (1). 
    Let $N(A)$ denote the set of sublattices that are adjacent to $A$ and including $A$. First, notice that $s_i'$ and $s_i$ are both not supported in adjacent sublattices of $A$ as these qubits are depolarized (otherwise they would be included in $A$) and so no Pauli with non-identity on $N(A)-A$ can pass through $\Pi_{M_b}$. Now, choose any $P \in M_b$. If $P$ has support on $A$, then it cannot have support outside of $N(A)$ because each $P$ is contained within a lightcone of the circuit and lightcones can only spread to adjacent sublattices. Therefore, its commutation relation with $s_i$ will be the same as its commutation relation with $s_i'$ (as they are identical on $N(A)$). Alternatively, if $P$ does not have support in $A$, then they must commute as well since $s_i$'s support is contained entirely in $A$.

    Equipped with this set of Pauli operators $\{s_i\}_{i \in A}$ with the above three properties, we finish the proof using the probabilistic method. Define a randomly chosen Pauli operator $s$ as follows,
    \begin{align}
        s = \prod_{i \in A} s_i^{r_i}
    \end{align}
    where each $r_i$ is drawn from an independent random variable $R_i$ taking values $0$ or $1$ with equal probability. Notice that $s$ maintains properties (1) and (3). For each qubit $i \in A$ define the following event:

    \begin{align}
        E_i = \begin{cases}
            1 & \text{if } s \text{ has a non-identity on qubit $i$}\\
            0 & \text{otherwise}
        \end{cases}
    \end{align}
    Observe that $\E E_i \geq 1/2$. This can be seen by the following argument. Assume without loss of generality that $s_i$ contains an $X$ on qubit $i$. The argument works the same if it contains a $Y$ or $Z$ instead and we are guaranteed one of the three is true by property (2). Then let $S_X = \{j \in A: \text{$s_j$ contains an $X$ on qubit $i$}\}$. Notice that if $\oplus_{j \in S_X} r_j=1$ then $s$ has non-identity on qubit $i$. $\Pr(\oplus_{j \in S_X} R_j=1)=1/2$ as long as $S_X$ is non-empty, which is true by assumption, and so $\E E_i \geq 1/2$.
    Using linearity of expectation, we have,
    \begin{align}
        \E \sum_{i\in A} E_i = \sum_{i \in A} \E  E_i \geq |A|/2 
    \end{align}
    There must be at least one assignment to the random variables $\{R_i\}$ that achieves at least this expectation. Therefore, there exists some Pauli operator $s \in \mathsf{C}(\langle M_b\rangle)$ with support on at least $|A|/2$ of the sublattices and no support outside of $A$.
\end{proof}

Next, we must show that with high probability there does not exist any such Pauli operator that has support on half of the sublattices of a large connected component. This set of Pauli operators is formally defined below.

\begin{definition}
    Let $T_k \subseteq \hat{\mathsf{P}}_n$ be the subset of Pauli operators such that $s \in T_k$ if $s$ is contained within a connected component of $k$ sublattices and has support on at least $k/2$ sublattices within this connected component.
\end{definition}

Next, using a similar argument to \cref{appendix:counting}, we bound the number of Pauli operators in $T_k$ that have a minimum weight of $w$ in the circuit.
\begin{lemma}
\label{lemma:skwsize}
\begin{align}
    |T_k \cap S_w| \leq nd 2^{3^D k}{k (6d)^D \choose w} 3^w
\end{align}
\end{lemma}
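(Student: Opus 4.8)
\textbf{Proof proposal for \cref{lemma:skwsize}.}

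The plan is to bound $|T_k \cap S_w|$ by a two-stage counting argument that mirrors the proof of \cref{lemma:size_of_sw}, but with the extra bookkeeping needed to specify \emph{which} connected component of $k$ sublattices the Pauli is supported on. Recall that any $s \in S_w$ achieves minimum circuit weight $w$ at some timestep $i$, and by unitarity $\mc C_i(s)$ determines $s$ uniquely; so it suffices to count the pairs (timestep $i$, weight-$w$ Pauli $\mc C_i(s)$ compatible with membership in $T_k$) and multiply. There are $d$ choices of timestep. The subtlety is that ``$s \in T_k$'' is a constraint on the \emph{input} Pauli $s$ and on its support across sublattices, while the weight-$w$ snapshot lives at timestep $i$; I need to relate the two.

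First I would fix the connected component. If $s \in T_k$ then $s$ is supported within some connected component $A$ of $k$ sublattices in $G$. Each sublattice is a box of side $2d$, so its forward lightcone over the remaining $\le d$ layers can only spread to adjacent sublattices, meaning the entire circuit history $\{\mc C_i(s)\}_i$ stays within $N(A)$, the union of $A$ with its neighbours. Since $G$ has maximum degree $3^D - 1$ (each sublattice has $3^D - 1$ neighbours including diagonals), a connected component of size $k$ has $|N(A)| \le 3^D k$ sublattices, hence at most $3^D k \cdot (2d)^D = 3^D k (2d)^D \le k(6d)^D$ qubits... actually let me be careful: the cleaner route is to say the weight-$w$ snapshot at time $i$ is supported on at most $w$ qubits lying within $N(A)$, and $N(A)$ contains at most $3^D k$ sublattices each with $(2d)^D$ qubits, giving an ambient set of size $\le 3^D k (2d)^D$; choosing $w$ locations for non-identities and assigning each a Pauli in $\{X,Y,Z\}$ gives $\binom{3^D k (2d)^D}{w} 3^w$ choices, and $3^D (2d)^D = 6^D d^D = (6d)^D$, so the binomial is $\binom{k(6d)^D}{w} 3^w$ — matching the stated bound's last two factors.

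So the remaining factors $nd \cdot 2^{3^D k}$ must come from: (i) the $d$ for the timestep, already noted; (ii) an $n$ for specifying which connected component $A$ is (it suffices to name one sublattice in it — there are at most $n/(2d)^D \le n$ sublattices — since $A$ is then determined as the component containing that anchor sublattice and the set of non-depolarized sublattices; alternatively anchor on qubit $i$ which has $s$ supported on it, at most $n$ choices); and (iii) the factor $2^{3^D k}$ to specify \emph{which} subset of the $\le 3^D k$ sublattices of $N(A)$ the snapshot actually touches — or more robustly, to account for the freedom in which sublattices of $A$ carry support. I would phrase it as: once $A$ (equivalently its anchor) and the timestep are fixed, the snapshot $\mc C_i(s)$ is a weight-$\le w$ Pauli on the $\le 3^D k$ sublattices of $N(A)$; we can first choose the subset of these sublattices that are nonempty ($\le 2^{3^D k}$ options) and then, within that subset (of at most $3^D k$ sublattices, total at most $k(6d)^D$ qubits — wait, I want this to still be $\le k(6d)^D$, which it is since we chose a subset), choose the $w$ non-identity positions and their Pauli labels, $\binom{k(6d)^D}{w}3^w$. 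Multiplying $d \cdot n \cdot 2^{3^D k} \cdot \binom{k(6d)^D}{w} 3^w$ gives the claim.

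The main obstacle is making the connected-component accounting rigorous without overcounting or undercounting: specifically, justifying that a single anchor (at most $n$ choices) plus the depolarization pattern pins down $A$, and that tracking the snapshot's support through the (at most) $d$ remaining layers never escapes $N(A)$. The lightcone-spreading bound (a layer of nearest-neighbour gates moves support by Manhattan distance $1$ per layer, sublattices have side $2d \ge 2 \cdot(\text{remaining layers})$) is the key geometric input and should be stated explicitly. Everything else is the same unitarity-plus-binomial bookkeeping as in \cref{lemma:size_of_sw}, so once the geometry is pinned down the calculation is routine.
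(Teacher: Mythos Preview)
Your decomposition of the binomial factor $\binom{k(6d)^D}{w}3^w$ and the $d$ from the timestep is correct and matches the paper. The gap is in how you account for the factor $n\cdot 2^{3^D k}$.

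You propose getting the $n$ from an anchor sublattice and then say ``$A$ is then determined as the component containing that anchor sublattice and the set of non-depolarized sublattices.'' This is not right: the set $T_k$ is defined purely combinatorially, without reference to any error configuration $b$ or any depolarization pattern. Membership in $T_k$ asks only that $s$ be supported inside \emph{some} connected $k$-subset $A$ of the sublattice graph $G$ (with support on at least $k/2$ of its sublattices). An anchor sublattice alone does not pin down $A$; there are in general exponentially many size-$k$ connected subgraphs of $G$ containing a given vertex. Your alternative justification for $2^{3^D k}$ --- choosing which subset of the $\le 3^D k$ sublattices of $N(A)$ the snapshot touches --- is circular (you need $A$ to define $N(A)$) and in any case redundant with the binomial, which already records the non-identity positions.

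The paper resolves this by invoking a standard percolation/graph-theory bound: in a $\Delta$-regular graph on at most $n$ vertices, the total number of connected subgraphs of size $k$ is at most $n\,2^{\Delta k}$ (see e.g.\ Krivelevich). With $\Delta = 3^D-1 \le 3^D$ this gives exactly the $n\cdot 2^{3^D k}$ factor for the number of choices of $A$. Once $A$ is fixed, the lightcone argument you gave confines $\mc C_i(s)$ to $N(A)$, whose qubit count is $\le 3^D k(2d)^D = k(6d)^D$, and then \cref{lemma:size_of_sw} applied with this ambient set yields $d\binom{k(6d)^D}{w}3^w$. So the missing ingredient in your sketch is precisely this connected-subgraph enumeration bound; everything else you wrote is sound.
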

\begin{proof}
    Recall that by definition each $s \in T_k$ is contained within a connected component of $k$ sublattices. There are at most $n2^{\Delta k}$ ways to choose a $k$-sized connected component (in which $s$ has support) in a $\Delta$-regular graph \cite{krivelevich2015phase}. Here $\Delta=3^D-1$ although we drop the minus one for convenience. When the circuit is applied, the support of $\mc C_i(s)$ is contained within the neighborhood of the original connected component due to lightcones. This consists of $3^Dk$ sublattices since each sublattice has $3^D$ neighbors including itself. We label the set of these qubits as $A$, where $|A| = 3^D k(2d)^D = (6d)^Dk$. Since $T_k \cap S_w\subseteq S^A_w$, it remains to upper bound $|S^A_w|$. By \cref{lemma:size_of_sw}, $|S^A_w| \leq d{k (6d)^D \choose w} 3^w$.
\end{proof}
Applying this upper bound on $|T_k \cap S_w|$, we can now bound the probability that any $s \in T_k$ survives $\Tilde{\mc C}$.

\begin{lemma}\label{lemma:no_large_component_paulis}
\begin{align}
    \mathbf{P}_b(\bigcup_{s \in T_k} \Pi_{M_b}(s) \neq 0) \leq nd\exp(-k)
\end{align}
when $d > d_{local}^*$, where  $d_{local}^* = O(\gamma^{-1}3^{2D}+ \gamma^{-1}D\log(\gamma^{-1}D))$
\end{lemma}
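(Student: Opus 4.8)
The plan is to bound the failure probability by a union bound over all Pauli operators $s \in T_k$, grouped by their minimum weight $w$ along the circuit. For a fixed $s \in S_w$, Fact~\ref{fact:survivalprob} gives $\mathbf{P}_b(\tilde{\mc C}_b(s)\neq 0) \le (1-\gamma)^{dw}$, and Lemma~\ref{lemma:skwsize} bounds $|T_k \cap S_w| \le nd\, 2^{3^D k}\binom{k(6d)^D}{w}3^w$. So I would write
\begin{align*}
    \mathbf{P}_b\!\Big(\bigcup_{s \in T_k} \Pi_{M_b}(s) \neq 0\Big)
    &\le \sum_{w=0}^{k(6d)^D} |T_k \cap S_w|\,(1-\gamma)^{dw}
    \le nd\, 2^{3^D k} \sum_{w} \binom{k(6d)^D}{w} \big(3(1-\gamma)^d\big)^w \\
    &= nd\, 2^{3^D k}\,\big(1 + 3(1-\gamma)^d\big)^{k(6d)^D}
    \le nd\, \exp\!\Big(3^D k \ln 2 + 3 k (6d)^D (1-\gamma)^d\Big),
\end{align*}
using the binomial theorem and $1+x \le e^x$. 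It then remains to show that the exponent is at most $-k$, i.e. that $3(6d)^D(1-\gamma)^d \le -1 - 3^D\ln 2$ is \emph{not} quite what we want — rather we need $3^D \ln 2 + 3(6d)^D(1-\gamma)^d \le -1$, which is impossible as stated, so the correct target is $3(6d)^D(1-\gamma)^d + 3^D\ln 2 \le 2\cdot 3^D \ln 2$ giving a bound of $nd\exp(3^D k\ln 2)$... Let me restate: we want the whole exponent bounded by $-k$ after possibly adjusting constants, so the real requirement is $3(6d)^D (1-\gamma)^d \le 1$ combined with absorbing the $2^{3^D k}$ factor into the redefinition of what ``$\exp(-k)$'' means up to the stated $d_{local}^*$; concretely one shows $3(6d)^D(1-\gamma)^d \le \tfrac12$ once $d$ exceeds the threshold, and separately checks that the constant $3^D\ln 2$ is dominated.

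Unwinding, the clean way is: choose $d_{local}^*$ large enough that $3(6d)^D(1-\gamma)^d \le 1$. Then the exponent is $\le (3^D\ln 2 + 3(6d)^D(1-\gamma)^d - 1)k \le$ \emph{(not yet negative)}. The honest fix, which I expect the paper uses, is to restate the conclusion with the constant absorbed, or to coarse-grain so that the $2^{3^D k}$ connected-component count is beaten by requiring $3(6d)^D(1-\gamma)^d$ to be \emph{exponentially} small, say $\le e^{-(1+3^D\ln 2)}$; then the exponent is $\le -k$. So the core analytic step is: solve $3(6d)^D(1-\gamma)^d \le c_D$ for a suitable constant $c_D = e^{-(1+3^D\ln 2)}$ depending only on $D$, and verify the solution is $d \ge O(\gamma^{-1}3^{2D} + \gamma^{-1}D\log(\gamma^{-1}D))$.

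The main obstacle — and the step deserving the most care — is exactly this last transcendental inequality. Taking logs, $(1-\gamma)^d \le c_D/(3(6d)^D)$ becomes $d\ln\frac{1}{1-\gamma} \ge \ln\frac{3(6d)^D}{c_D} = D\ln(6d) + \ln(3/c_D)$. Using $\ln\frac{1}{1-\gamma}\ge \gamma$ and $\ln(3/c_D) = O(3^D)$, this is $\gamma d \ge D\ln(6d) + O(3^D)$, a fixed-point-type inequality of the form $x \ge a\ln x + b$ with $x = \gamma d$, $a = O(D)$, $b = O(3^D) + O(D\log\gamma^{-1})$ (the $\log\gamma^{-1}$ coming from $\ln(6d) = \ln(6\gamma^{-1}x) \le \ln(6\gamma^{-1}) + \ln x$). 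The standard Lambert-$W$ / bootstrapping bound on such inequalities (the ``Lambert bound'' referenced as \cref{eq:lambert_bound} elsewhere in the paper) gives $x = O(a\log a + b) = O(D\log D + 3^D + D\log\gamma^{-1})$, hence $d = O(\gamma^{-1}3^{2D} + \gamma^{-1}D\log(\gamma^{-1}D))$ after loosely bounding $3^D + D\log D$ by $3^{2D}$ and $D\log D \le D\log(\gamma^{-1}D)$. I would carry out the union bound first (two lines), then isolate the inequality $3(6d)^D(1-\gamma)^d \le c_D$, then invoke the Lambert-type estimate to extract the stated threshold, being careful that the $2^{3^D k}$ entropy term from counting connected components is what forces the $3^{2D}$ (rather than $3^D$) in the threshold.
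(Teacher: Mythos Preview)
Your union bound and binomial-theorem computation are correct up to the line
\[
\mathbf{P}_b\!\Big(\bigcup_{s \in T_k} \Pi_{M_b}(s) \neq 0\Big) \;\le\; nd\,\exp\!\Big(\big(3^D\ln 2 + 3(6d)^D(1-\gamma)^d\big)\,k\Big),
\]
but then the argument breaks down and cannot be repaired by your proposed fixes. The coefficient $3^D\ln 2 + 3(6d)^D(1-\gamma)^d$ is \emph{strictly positive for every $d$}, since $3^D\ln 2 > 0$ is independent of $d$. So no choice of $d$ can make this exponent $\le -k$; making $3(6d)^D(1-\gamma)^d$ ``exponentially small'' only drives the coefficient down to $3^D\ln 2$, which still yields the useless bound $nd\cdot 2^{3^D k}$. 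Absorbing constants into the statement does not help either: the bound \emph{grows} with $k$.

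The missing idea is a lower bound on the minimum weight along the path. The paper first proves (via a lightcone argument: if $\mc C_i(s)$ touches $x$ sublattices then $s=\mc C_i^\dagger(\mc C_i(s))$ touches at most $3^D x$ sublattices, and $s\in T_k$ touches $\ge k/2$) that every $s\in T_k$ satisfies $|\mc C_i(s)|\ge k/(2\cdot 3^D)$ for all $i$, hence the sum over $w$ starts at $w\ge k/(2\cdot 3^D)$. This lets one split $(1-\gamma)^{dw}=(1-\gamma)^{dw/2}(1-\gamma)^{dw/2}$ and pull out a factor $(1-\gamma)^{dk/(4\cdot 3^D)}$ before applying the binomial theorem to the remaining half. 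The resulting exponent is
\[
-\,\frac{\gamma d}{4\cdot 3^D}\,k \;+\; 3^D k\ln 2 \;+\; 3(1-\gamma)^{d/2}(6d)^D\,k,
\]
and now there \emph{is} a $d$-dependent negative term. Requiring $\tfrac{\gamma d}{4\cdot 3^D}\ge 3^D\ln 2 + 2$ gives $d\gtrsim \gamma^{-1}3^{2D}$ (this is the true source of the $3^{2D}$ in the threshold, not the Lambert estimate), and separately bounding $3(6d)^D(1-\gamma)^{d/2}\le 1$ via the Lambert-type step you sketched gives the $\gamma^{-1}D\log(\gamma^{-1}D)$ contribution.
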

\begin{proof}
First note that for any timestep $i$ and any $s \in T_k$, $|\mc C_i(s)| \geq \frac{k}{2\cdot 3^D}$. To show this, we apply a similar lightcone argument to the proof of \cref{lemma:skwsize}. If $\mc C_i(s)$ has support on $x$ sublattices, then $\mc C_i^\dagger (\mc C_i(s)) = s$ has support on at most $3^Dx$ sublattices due to lightcones. Therefore, since $s$ has support on $k/2$ sublattices, this guarantees that $\mc C_i(s)$ has support on at least $k/(2 \cdot 3^D)$ sublattices.

\begin{align}
     \mathbf{P}_b(\bigcup_{s \in T_k} \Pi_{M_b}(s) \neq 0)  &\leq \sum_{s \in T_k}  \mathbf{P}_b(\Pi_{M_b}(s) \neq 0) \\
    \label{eq:step2}
    &= \sum_{w=k/(2 \cdot 3^D)}^{k (6d)^D } \sum_{s \in T_k \cap S_w} (1-\gamma)^{\sum_i | \mc C_i(s)|}  \\
    \label{eq:step3}
    & \leq \sum_{w=k/(2 \cdot 3^D)}^{k (6d)^D } |T_k \cap S_w| (1-\gamma)^{wd} \\
    \label{eq:step4}
    & \leq (1-\gamma)^{kd/(4 \cdot 3^D)} \sum_{w=1}^{k (6d)^D } |T_k \cap S_w| (1-\gamma)^{wd/2} \\ 
    \label{eq:step5}
    & \leq (1-\gamma)^{kd/(4 \cdot 3^D)}\sum_{w=1}^{k (6d)^D }  nd 2^{3^D k} {k (6d)^D \choose w} 3^w (1-\gamma)^{wd/2}  \\
    \label{eq:bin}
    &= (1-\gamma)^{kd/(4 \cdot 3^D)} nd2^{3^Dk} (1+3(1-\gamma)^{d/2})^{k (6d)^D} \\
    &\leq nd\exp( \frac{k(-\gamma d)}{4 \cdot 3^D}+k3^D\ln 2+ 3(1-\gamma)^{d/2}k(6d)^D)\\
    &\leq nd\exp(- k [\frac{\gamma d}{4 \cdot 3^D} - 3^D \ln 2 - 3(1-\gamma)^{d/2}(6d)^D ]) \label{eq:final_bound}
\end{align}
\cref{eq:step2} and \cref{eq:step3} follow from \cref{fact:survivalprob}, \cref{eq:step4} splits the error term into two factors, \cref{eq:step5} applies \cref{lemma:skwsize}, and \cref{eq:bin} applies the binomial theorem. We now want to show that the following term is $\geq 1$ in some depth regimes,
\begin{align}
    \frac{\gamma d}{4 \cdot 3^D} - 3^D \ln 2 - 3(1-\gamma)^{d/2}(6d)^D 
\end{align}
We can lower bound the first two terms $\frac{\gamma d}{4 \cdot 3^D} - 3^D \ln 2 \geq 2$ when $d \geq \gamma^{-1}(4 \cdot 3^D)(3^{D} \ln 2 + 2)$. We can write the third term as,
\begin{align} \label{eq:lambert_bound}
    3(1-\gamma)^{d/2}(6d)^D &\leq 3(6de^{-\frac{\gamma d}{2D}})^D \\
    &= 3(6d \frac{4D}{\gamma}\frac{\gamma}{4D} e^{-\frac{\gamma d}{2D}})^D\\
    &< 3(6\frac{4D}{\gamma}e^{-\frac{\gamma d}{4D}})^D &\text{using the fact that $xe^{-x} < 1$} \\
    &= 3(6\frac{4D}{\gamma})^D e^{-\frac{\gamma d}{4}}\\
    &\leq 1
\end{align}
where the last inequality holds true when $d \geq O(\gamma^{-1}D\log(\gamma^{-1}D))$. Thus, we have shown that $\frac{\gamma d}{4 \cdot 3^D} - 3^D \ln 2 - 3(1-\gamma)^{d/2}(6d)^D \geq 1$ when $d > d^*_{local}$ where $d^*_{local} = O(\gamma^{-1}3^{2D} + \gamma^{-1}D\log(\gamma^{-1}D))$.

\end{proof}

\begin{corollary} (Restatement of \cref{corollary:prob_x})
Let $\{L_j\}_j$ be the partition of qubits into subsets in Alg. \ref{alg:1}.
\begin{align*}
			\mathbf{P}(\max_j |L_j| \geq x) &\leq nd \exp(-x/(2d)^D) \\
             \mathbf{P}(\max_j diam(L_j) \geq x) &\leq 2^{3^D}nd \exp(-\Omega(\frac{\gamma x}{3^DD}))
		\end{align*}
when $d > d_{local}^*$, where  $d_{local}^* = O(\gamma^{-1}3^{2D}+ \gamma^{-1}D\log(\gamma^{-1}D))$. $diam(L_j)$ denotes the maximum manhattan distance between two qubits in $L_j$.
\end{corollary}
\begin{proof}
    Note that each set $L_j$ in Algorithm \ref{alg:1} spans at least $|L_j|/(2d)^D$ sublattices. Therefore by \cref{lemma:existshighwtpauli} there exists $s \in T_{k}$ for $k \geq \max_j|L_j|/(2d)^D$ such that $\Pi_{M_b}(s) \neq 0$. Using \cref{lemma:no_large_component_paulis}, this suffices as proof of the first inequality. When considering the diameter of $L_j$, we can make a few improvements. Firstly, when $d > d^*_{local}$,  \cref{lemma:no_large_component_paulis} can be tightened to 
        \begin{align}
            \mathbf{P}_b(\bigcup_{s \in T_k} \Pi_{M_b}(s) \neq 0) \leq 2^{3^D} nd\exp(-\Omega(\frac{\gamma d k}{3^D}))
        \end{align}
        by a simple examination of \cref{eq:final_bound}. Next, if $L_j$ contains two qubits of manhattan distance $x$, then it spans at least $k \geq x/2dD$ sublattices (regardless of $D$) by considering the case where they are diagonal. Combining these two facts gives the second inequality.
\end{proof}

\section{Subroutine for IQP + CNOT Circuits}
\label{appendix:iqpcnot}

\begin{theorem}(Restatement of \cref{theorem:iqp+cnot})
The results of \cref{theorem:main} apply in the case that $\rho = \ket{+}^{\otimes n}$ and $\tilde{C}$ is composed of diagonal gates and CNOT gates. Moreover, $\mc E$ can be any noise channel in the following class of single-qubit Pauli noise channels,
\begin{align}
\mc E(\rho) = (1-p_X -p_Y-p_Z) + p_X X \rho X + p_Y Y \rho Y + p_Z Z \rho Z
\end{align}
where $\gamma = p_Z + \min(p_Y,p_Z)$. Notably, this includes depolarizing and dephasing channels.
\end{theorem}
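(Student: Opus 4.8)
The plan is to reuse the entire pipeline of the Clifford proof—stochastically unravelling the noise (\cref{sec:stochastic}), commuting the propagatable part of each error to the input (\cref{sec:local}), and then running the counting and percolation arguments of \cref{appendix:counting,appendix:runtime}—and to localize all the changes to the error-propagation step, since the mid-circuit gates are now diagonal non-Clifford gates together with CNOTs rather than arbitrary Clifford layers.

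First I would unravel each single-qubit Pauli channel $\mc E$ as a random Pauli written in the form $X^a Z^b$ (using $Y = XZ$ up to phase), so that one can stochastically peel off a nontrivial $Z$-component $Z^b$ at each noise location at rate $\gamma$ (the value $p_Z + \min(p_Y,p_Z)$ in the statement being a concrete achievable choice). I then propagate \emph{only} the $Z$-components to the front exactly as in \cref{sec:local}: a $Z$-type Pauli commutes with every diagonal gate and is sent to a $Z$-type Pauli by every CNOT, so each propagated component reaches the input as a $Z$-type Pauli, and together they generate a $Z$-type error group $\langle M_b\rangle$ with associated projection channel $\Pi_{M_b}$ acting on $\ketbra{+}^{\otimes n}$. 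The leftover $X$-components are not propagated; they stay in place and are applied as part of the (now stochastic) circuit.

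The key point is that $\Pi_{M_b}$ is \emph{identical} to the propagated-error channel produced by the modified circuit in which every diagonal gate is replaced by $\eye$ and \emph{both} $X$- and $Z$-components are propagated to the input: in the modified circuit the $Z$-components propagate in exactly the same way (identities and diagonal gates both commute with $Z$, and the CNOTs are unchanged), while the $X$-components become $X$-type Paulis, which act trivially on $\ketbra{+}^{\otimes n}$ since it is a $+1$-eigenstate of every element of $\mathsf{P}^X_n$. Because the modified circuit is Clifford, every statement proved in the Clifford case applies verbatim to $\langle M_b\rangle$ and $\Pi_{M_b}$: a single propagated $Z_i$ already makes qubit $i$ depolarized (now because $\Pi_{Z_i}(\ketbra{+}) = \eye/2$, rather than needing both $X_i$ and $Z_i$), so \cref{lemma:survivingpaulis}, \cref{def:depolarization}, \cref{lemma:existshighwtpauli}, \cref{lemma:skwsize} and \cref{lemma:no_large_component_paulis} go through with this $\gamma$, yielding the same depth thresholds $d^* \le O(\gamma^{-1}\log n)$ and $d^*_{local} \le O(D\gamma^{-1}\log(D\gamma^{-1}))$ and the same tail bound $\mathbf{P}(|L_j| = x) \le nd\exp(-x/(2d)^D)$ on the sizes of the non-depolarized components.

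It then remains to simulate, inside each component $L_j$, the genuine noisy IQP$+$CNOT sub-circuit: $\Tr_{-L_j}\Pi_{M_b}(\ketbra{+}^{\otimes n})$ is a uniform mixture over a linear subspace of $\{+,-\}^{L_j}$-configurations—equivalently, a sparse sum of $X$-type Paulis supported on the non-depolarized qubits of $L_j$—to which we apply the CNOTs, the diagonal gates, and the sampled mid-circuit $X$-errors, and from which we extract Hadamard-basis marginals via the sampling-to-computing reduction of \cref{theorem:main}. The main obstacle is that diagonal non-Clifford gates spread a Pauli into a sum of Paulis, so the Pauli-basis cost bound of \cref{lemma:num_surviving_paulis} does not literally apply. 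I would resolve this with the same survival principle used throughout: an input Pauli contributes only if it is not annihilated by the propagated errors, and \cref{fact:survivalprob} then forces $|\mc C_i(s)|$ to be small at every timestep, so the support of every diagonal gate that can act nontrivially along its path stays logarithmic in $n$; hence each marginal is an exponential sum over only polynomially many configurations. In the geometrically local case the $2d$-bounded lightcones make this immediate, and for all-to-all connectivity one combines the weight bound with the $O(\gamma^{-1}\log n)$ threshold exactly as in \cref{sec:runtime}, so the expected-runtime bounds of \cref{theorem:main} carry over unchanged.
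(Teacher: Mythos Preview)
Your reduction of the error-propagation and percolation analysis to the Clifford case is correct and matches the paper exactly: propagate only the $Z$-components (which stay $Z$-type under diagonal gates and CNOTs), observe that the resulting $\Pi_{M_b}$ coincides with the one obtained from the Clifford circuit in which every diagonal gate is replaced by identity, and import the depolarization and percolation bounds with the one-sided criterion $Z_i\in\langle M_b\rangle$.

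The gap is in the simulation of each component $L_j$. Your claim that ``\cref{fact:survivalprob} then forces $|\mc C_i(s)|$ to be small at every timestep'' does not follow: that fact is a bound on the \emph{probability} that a Pauli of minimum weight $w$ survives, not on the weight of a surviving Pauli at any given layer, and it refers to the Cliffordized circuit, not the actual one. Even granting small support, the cost of pushing a surviving $s\in\mathsf{P}^X_n$ through the genuine IQP$+$CNOT circuit is governed by the \emph{number} of diagonal non-Clifford gates it touches along its path (each causes a branch), which is $\Theta(d\,|L_j|)$ in the worst case and is not controlled by $|s|$. Your $\{+,-\}$-mixture description is correct but does not help either: each $\{+,-\}$ product state is a superposition of all $2^{|L_j|}$ computational basis kets, and this gate set does not shrink that. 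Neither representation you invoke therefore yields the runtime of \cref{theorem:main}; in the all-to-all case ($|L_j|=n$) you get no polynomial bound at all.

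The paper resolves this with a different representation. Because $G_j\subseteq\mathsf{P}^X_n$, a CNOT/SWAP change of basis on the tableau (\cref{lemma:iqp_converter}) rewrites the projected input as a uniform mixture $\E_r\ketbra{\psi_r}$ where each $\ket{\psi_r}$ is a superposition of exactly $2^{|G_j|}=|\langle G_j\rangle|$ \emph{computational basis} kets. One samples $r$ and simulates $C\ket{\psi_r}$ as a state vector; the crucial observation is that CNOTs, diagonal gates, and the in-place $X$-errors each send a computational basis ket to a single computational basis ket up to phase, so the sparsity $|\langle G_j\rangle|$ is preserved through the entire circuit. This is precisely what lets \cref{lemma:num_surviving_paulis} feed into \cref{corollary:runtime_component} in the IQP$+$CNOT setting, and it is the idea your sketch is missing.
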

\begin{proof}
We make the following edits to the steps of Algorithm \ref{alg:1}. 
\begin{itemize}
    \item Step \ref{step:2}: Add only $\mc C^\dagger_t (Z_i)$ to $M_b$ instead of both $\mc C^\dagger_t (Z_i)$ and $\mc C^\dagger_t (X_i)$. Simulate the $\Pi_{X_i}$ in place by adding either an $X_i$ or an identity to the circuit with equal probability. As noted in \cite{rajakumar_polynomial-time_2024}, this technique allows us to generalize to dephasing noise channels and channels which can be written as a composition of dephasing channels with other Pauli noise channels.
    \item Step \ref{step:3}: We will say qubit $i$ is `depolarized' if $Z_i \in \langle M_b \rangle$ (since $\Pi_{Z}(\ketbra{+}) = \frac{\eye}{2}$).
    \item Step \ref{step:4}: Compute a generating set $G$ for $\mathsf{C}(\langle M_b \rangle ) \cap \mathsf{P}^X_n$ rather than the full group $\mathsf{C}(\langle M_b \rangle )$, since the Pauli basis decomposition of the input state $\ketbra{+}^n$ contains only operators in $\mathsf{P}^X_n$
\end{itemize}

After the above modifications, $M_b \subseteq \mathsf{P}^Z_n$, while $ G \subseteq \mathsf{P}^X_n$. Due to this fact, most of the analysis for the Clifford case imports directly over. The only step of the analysis that is not as trivial is computing the output probabilities for the distribution on $L_j$ after writing the output state in the Pauli basis. In the Clifford case, this is simple because one can efficiently evolve each Pauli basis element as $\mc C(s)$ for each $s \in \langle G_j \rangle$. In the IQP+CNOT setting, the non-Clifford diagonal gates may act non-trivially on the Pauli operators, so simulation in the Pauli basis may not be ideal. In other words, going from \cref{lemma:num_surviving_paulis} to \cref{corollary:runtime_component} is not simple. 

First we note that the input state can be written in the Pauli basis as:
\begin{align}
    \rho = \frac{1}{2^n} \sum_{s \in \langle G_j\rangle } \Tr (\ketbra{+}^ns) s =\frac{1}{2^n} \sum_{s \in \langle G_j\rangle } s
\end{align}
Next, \cref{lemma:iqp_converter} (proved below) tells us that states of this form also have a sparse representation in the computational basis. That is, 
\begin{align}
\rho = \E_{r \in \{0,1\}^{n-|G|}}\ketbra{\psi_r}
\end{align}
where each $\ket{\psi_b}$ is a superposition of only $2^{|G_j|}$ computational basis states. Once we have this representation, we can sample $r \in \{0,1\}^{n-|G_j|}$ randomly, and simulate $C \ket{\psi_r}$ in the computational basis by updating its state vector description after each CNOT and diagonal gate. Note that neither CNOTs nor diagonal gates can create a superposition of computational basis states, so we will only need to store $2^{|G_j|} = |\langle G_j \rangle |$ computational basis states throughout the entire circuit evolution. We can then sample single-qubit measurement outcomes on the final state using standard state vector methods.

\end{proof}

\begin{lemma} \label{lemma:iqp_converter}
    For any set of Pauli operators $G \leq \mathsf{P}^X_n$ such that $\rho = \frac{1}{2^n}\sum_{s \in \langle G\rangle } s$ is a valid quantum state, one can find a decomposition of $\rho$ of the following form
    \begin{align}
        \rho = \E_{r \in \{0,1\}^{n-|G|}}\ketbra{\psi_r}
    \end{align}
    where $\ket{\psi_b} = \sum_{i \in \{0,1\}^{|G|}}\frac{1}{\sqrt{2^{|G|}}} \ket{f(i,r)} $ for some efficiently computable function $f: \{0,1\}^n \to \{0,1\}^n$.
\end{lemma}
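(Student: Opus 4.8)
The plan is to exploit the fact that every operator in $\mathsf{P}^X_n$ is a tensor product of $X$'s and identities: such operators pairwise commute, square to the identity, and act on the computational basis as bit flips. First I would take $G=\{g_1,\dots,g_k\}$ to be an independent generating set for $\langle G\rangle$ (so $k=|G|=\log_2|\langle G\rangle|$; replace $G$ by a basis otherwise) and identify $\langle G\rangle$ with a $k$-dimensional subspace $V\subseteq\mathbb Z_2^n$ via the $X$-symplectic representation, where the group element attached to $v\in V$ is $X^v$ with $X^v\ket{x}=\ket{x\oplus v}$. Since $\langle G\rangle$ is independent, $\sum_{s\in\langle G\rangle}s=\prod_{j=1}^{k}(\eye+g_j)$, so $\rho=2^{-(n-k)}\prod_{j=1}^{k}\tfrac{\eye+g_j}{2}$ is, up to normalization, the projector onto the common $+1$-eigenspace of the $g_j$ --- i.e.\ the maximally mixed state on an $X$-type CSS code space of dimension $2^{n-k}$ --- and is in particular automatically a valid state, so the stated hypothesis is only relevant if $\mathsf{P}^X_n$ is permitted to carry signs.

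Next I would expand $\rho$ directly in the computational basis and resum by cosets of $V$. Writing
\[
\rho=\frac{1}{2^n}\sum_{v\in V}X^v=\frac{1}{2^n}\sum_{x\in\{0,1\}^n}\Bigl(\sum_{v\in V}\ket{x\oplus v}\Bigr)\bra{x},
\]
and grouping the outer sum over cosets $c\in\mathbb Z_2^n/V$ while noting that $\sum_{v\in V}\ket{x\oplus v}=\sum_{y\in c}\ket{y}=:\ket{\phi_c}$ whenever $x\in c$, I get $\rho=\frac{1}{2^n}\sum_{c}\ketbra{\phi_c}$. Each $\ket{\phi_c}$ has $\|\ket{\phi_c}\|^2=|V|=2^{k}$ and there are $2^{n-k}$ cosets, so setting $\ket{\psi_c}:=2^{-k/2}\ket{\phi_c}$ yields exactly $\rho=2^{-(n-k)}\sum_{c}\ketbra{\psi_c}=\E_{c}\ketbra{\psi_c}$: a uniform mixture over the $2^{n-|G|}$ cosets of equal-weight superpositions, each supported on $2^{|G|}$ computational basis states.

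Finally I would make the parametrisation explicit and efficient. Using Gaussian elimination on $G$, I extract a basis $v_1,\dots,v_k$ of $V$, extend it to a basis $v_1,\dots,v_k,w_1,\dots,w_{n-k}$ of $\mathbb Z_2^n$, identify $r\in\{0,1\}^{n-|G|}$ with the coset of $w(r):=\bigoplus_j r_j w_j$ and $i\in\{0,1\}^{|G|}$ with $v(i):=\bigoplus_j i_j v_j\in V$, and set $f(i,r):=v(i)\oplus w(r)$. Then $f$ is computable with $O(n^2)$ $\mathbb Z_2$-arithmetic, the map $i\mapsto f(i,r)$ is injective for each fixed $r$ (the $v_j$ are independent), so $\ket{\psi_r}=\sum_{i\in\{0,1\}^{|G|}}2^{-|G|/2}\ket{f(i,r)}$ with support of size exactly $2^{|G|}$, which is the claimed decomposition. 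I expect the only delicate point to be sign bookkeeping: if $\langle G\rangle$ may contain signed operators, validity of $\rho$ forces the signs to form a linear character $v\mapsto(-1)^{u\cdot v}$, whereupon the same coset computation (equivalently, conjugation by the diagonal Clifford $Z^u$) produces the $\ket{\psi_r}$ as $\pm$-weighted superpositions of $2^{|G|}$ basis states, which is all the subsequent argument in \cref{theorem:iqp+cnot} actually uses; with phaseless $\mathsf{P}^X_n$ one obtains the uniform form verbatim. The core content is the routine coset resummation above.
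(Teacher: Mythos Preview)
Your argument is correct. The coset resummation is clean: writing $X^v=\sum_x\ket{x\oplus v}\bra{x}$ and regrouping the double sum over $x$ and $v\in V$ by cosets of $V$ indeed gives $\rho=\frac{1}{2^n}\sum_{c}\ketbra{\phi_c}$ with $\ket{\phi_c}=\sum_{y\in c}\ket{y}$, and the normalization and parametrization you give match the statement exactly. Your remark about signs is a nice observation, though in this paper $\mathsf{P}^X_n$ is phaseless so the issue does not arise.

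The paper takes a closely related but more operational route: it builds, via Gauss--Jordan elimination on the $X$-tableau $\mathrm T_G^X$, an invertible $\mathbb Z_2$-matrix $A$ such that $\mathrm T_G^X A=(\eye_{|G|}\ 0)$, interprets the column operations as a CNOT/SWAP circuit $U_A$, and shows $\mc U_A(\rho)=\ketbra{+}^{\otimes|G|}\otimes(\eye/2)^{\otimes n-|G|}$; the decomposition then comes from writing this canonical form as $\E_r\,\ketbra{+}^{\otimes|G|}\otimes\ketbra{r}$ and pulling back through $U_A^\dagger$. Your basis extension $(v_1,\dots,v_k,w_1,\dots,w_{n-k})$ is precisely the data encoded in $A^{-1}$, and your $f(i,r)=v(i)\oplus w(r)$ is exactly the bit-string action of $U_A^\dagger$ on $\ket{i}\ket{r}$. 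So the two proofs are the same linear-algebraic content packaged differently: yours is a direct combinatorial identification of the coset structure, while the paper's produces an explicit Clifford circuit realizing the change of basis. Your version is slightly more elementary; the paper's has the minor advantage of making the circuit $U_A$ explicit, though that is not needed for the lemma as stated.
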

\begin{proof}
    Let $\mathrm T^X_G$ be the first $n$ columns of the matrix $\mathrm T_G$. In other words it is the $X$ component of the tableau matrix for $G$, which only contains Pauli operators in $\mathsf{P}^X_n$ anyways. It can be shown that there exists an $n \times n$ binary matrix $A$ such that

    \begin{align}
        T^X_G A = \begin{pmatrix}[cc]
                    \eye_{|G|} & 0
        \end{pmatrix}
    \end{align}
    where $A$ is composed of the elementary column operations associated with performing Gauss-Jordan elimination on the transpose of $\mathrm T^X_G$. The above form follows from the fact that $\mathrm T^X_G$ has rank $|G|$ since $G$ is a basis of independent operators.
    
    

     Note, elementary column operations on binary matrices are either (1) swapping column $i$ and column $j$ or (2) setting column $i$ to be column $i$ $\oplus$ column $j$. Since each column represents a qubit, operation (1) corresponds to a SWAP operation between qubit $i$ and $j$ while operation (2) corresponds to a CNOT operation targetted on qubit $i$ and controlled on qubit $j$. Therefore, $A$ can be associated with a unitary matrix $U_A$ which performs these operations.

     Notice now that,
     \begin{align}
         \mc U_A(\rho) &= \mc U_A \left( \frac{1}{2^n}\sum_{b_1,\ldots,b_{|G|} \in \{0,1\}}  \prod_{s \in G} s^{b_i} \right)\\
         &= \frac{1}{2^n} \sum_{b_1,\ldots,b_{|G|} \in \{0,1\}} \prod_{s \in G} \mc U_A(s^{b_i})\\
         &= \frac{1}{2^n} \sum_{b_1,\ldots,b_{|G|} \in \{0,1\}}\prod_{i \in |G|} X_{i}^{b_i}\\
         &= \left(\frac{\eye+X}{2} \right)^{\otimes |G|} \otimes \left(\frac{\eye}{2}\right)^{\otimes n-|G|}\\
         &=  \ketbra{+}^{\otimes |G|} \otimes \left(\frac{\eye}{2}\right)^{\otimes n-|G|}\\
         &=  \E_{r \in \{0,1\}^{n-|G|}} \ketbra{+}^{|G|} \otimes \ketbra{r}\\
     \end{align}
     Therefore, $\rho = \mc U_A^{\dagger} (\E_{r \in \{0,1\}^{n-|G|}} \ketbra{+}^{|G|} \otimes \ketbra{r})$. Since $U_A^{\dagger}$ is composed of SWAP and CNOT gates, it maps computational basis states to computational basis states. Defining $f : \{0,1\}^n \to \{0,1\}^n$ to be the function such that $f(x) =\mc U_A^{\dagger}(\ketbra{x})$ concludes the proof.
\end{proof}

\section{Noise-induced Anticoncentration in Clifford Circuits}
\label{app:anticoncentration}

While the circuits we consider are known to be hard to exactly sample from in the worst-case, an additional property known as anticoncentration is usually required in order to show hardness of approximate sampling. In particular, Clifford-Magic, Conjugated Clifford, and IQP+CNOT circuits were all initially proposed as an ensemble of random circuits which anticoncentrate \cite{bouland,yoganathan_quantum_2019,hangleiter_fault-tolerant_2024,bluvstein_logical_2024}. This anticoncentration property enables one to prove hardness of approximate sampling for these circuit families (up to complexity-theoretic assumptions), but at the same time, it allows for polynomial-time classical simulation when the circuit is noisy \cite{bremner_achieving_2017,aharonov_polynomial-time_2023,schuster_polynomial-time_2024}. Thus, an important line of work is to understand the regimes in which anticoncentration sets in. Here, we show that our counting arguments can be adapted to show that \emph{any} noisy Clifford circuit on random input bit-strings anticoncentrates in $O(\log n)$ depth. When considering an ensemble of circuits where Clifford gates are drawn randomly and states are prepared and measured in an arbitrary product basis, the randomness in the input bit-strings can be absorbed into the circuit. Finally, since random Clifford gates form a 2-design for Haar random gates \cite{Dankert_2009,DiVincenzo_2002,webb2016Cliffordgroupformsunitary,zhu_Clifford_2016}, this result also implies that noisy Haar random circuits anticoncentrate in $\Theta(\log n)$-depth for all architectures, due to noise. This was only previously known for $1D$ and all-to-all architectures. 

\begin{theorem}
    Using the same notation as \cref{theorem:main}, suppose we use $p_{\tilde{\mc C},y}$ to denote the probability distribution generated by $\tilde{\mc C}$ applied to some computational basis state $\ketbra{y}$ and measured in an arbitrary basis. Then, when $d \geq O(\gamma^{-1} \log n)$:
    \begin{align}
        \E_{y \sim \{0,1\}^n} \sum_{x\in\{0,1\}^n} p_{\tilde{\mc C},y}(x)^2 = \frac{O(1)}{2^n}
    \end{align}
where the expectation is taken over the uniform distribution.
\end{theorem}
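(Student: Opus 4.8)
The plan is to estimate the collision probability $\E_{y}\sum_x p_{\tilde{\mc C},y}(x)^2$ by expanding everything in the Pauli basis, using the average over the input string $y$ to annihilate all cross terms, and then feeding the result into the counting machinery of \cref{appendix:counting}. First I would record a second‑moment identity for the arbitrary product measurement: since $F_x=\bigotimes_i\mc U_i(\ketbra{x_i}{x_i})$, set $W_i:=\mc U_i Z\mc U_i^\dagger$ (a traceless Hermitian involution) and $W_S:=\bigotimes_{i\in S}W_i$ for $S\subseteq[n]$. A one‑qubit computation gives $\sum_{x_i}(\mc U_i\ketbra{x_i}{x_i}\mc U_i^\dagger)\otimes(\mc U_i\ketbra{x_i}{x_i}\mc U_i^\dagger)=\tfrac12(\eye\otimes\eye+W_i\otimes W_i)$, hence $\sum_x F_x\otimes F_x=\frac1{2^n}\sum_{S\subseteq[n]}W_S\otimes W_S$, so $\sum_x\Tr(F_x\sigma)^2=\frac1{2^n}\sum_{S\subseteq[n]}\Tr(W_S\sigma)^2$ for any state $\sigma$. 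Next, write the input as $\ketbra{y}=\frac1{2^n}\sum_{t\in\mathsf{P}^Z_n}(-1)^{y\cdot t}t$ (the sum over $Z$-type Paulis, with $y\cdot t$ the parity of $y$ on $\mathrm{supp}(t)$) and use the stochastic error‑propagation picture of \cref{sec:preprocessing}: $\tilde{\mc C}(\ketbra{y})=\E_b\,\mc C\circ\Pi_{M_b}(\ketbra{y})$, where by \cref{lemma:survivingpaulis} a $Z$-type Pauli $t$ passes through $\Pi_{M_b}$ exactly when $\tilde{\mc C}_b(t)\neq0$, i.e.\ with probability $\rho(t):=(1-\gamma)^{\sum_i|\mc C_i(t)|}$ by \cref{eq:simple_survival_prob}. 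Substituting, squaring, and averaging over $y$ with $\E_y(-1)^{y\cdot(t\oplus t')}=[t=t']$ collapses the double sum and leaves
\begin{align}
    \E_y\sum_x p_{\tilde{\mc C},y}(x)^2=\frac1{2^n\cdot 4^n}\sum_{t\in\mathsf{P}^Z_n}\rho(t)^2\sum_{S\subseteq[n]}\Tr(W_S\mc C(t))^2 .
\end{align}

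For a fixed $t$, $\mc C(t)$ is a single Pauli, so $\Tr(W_S\mc C(t))\neq0$ forces $S=\mathrm{supp}(\mc C(t))$, and $|\Tr(W_S\mc C(t))|\leq\norm{W_S}_\infty\norm{\mc C(t)}_1=2^n$; hence the inner sum is at most $4^n$, the arbitrary single‑qubit rotations $\mc U_i$ drop out entirely, and we are reduced to showing $\sum_{t\in\mathsf{P}^Z_n}\rho(t)^2\leq O(1)$, since then $\E_y\sum_x p_{\tilde{\mc C},y}(x)^2\leq 2^{-n}\sum_{t}\rho(t)^2=\frac{O(1)}{2^n}$.

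The remaining bound is where the counting of \cref{appendix:counting} enters, but a verbatim appeal to \cref{fact:survivalprob} and \cref{lemma:size_of_sw} gives only $\sum_t\rho(t)^2\leq\sum_w d\binom{n}{w}3^w(1-\gamma)^{2dw}\leq d\exp(3n(1-\gamma)^{2d})$, which at $d=\Theta(\gamma^{-1}\log n)$ is $O(\gamma^{-1}\log n)$ rather than $O(1)$ — too weak by the factor $d$ that \cref{lemma:size_of_sw} spends on choosing the layer of minimum weight. To eliminate it, for $t\in\mathsf{P}^Z_n$ let $w:=\min_i|\mc C_i(t)|$ and let $i^\star$ be the \emph{first} layer attaining $w$; every earlier layer has weight $\geq w+1$, so the refined estimate $\sum_i|\mc C_i(t)|\geq dw+i^\star$ gives $\rho(t)^2\leq(1-\gamma)^{2(dw+i^\star)}$. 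Since a Pauli is determined by its image under any single layer $\mc C_{i^\star}$, there are at most $\binom{n}{w}3^w$ operators $t$ with a prescribed pair $(w,i^\star)$; summing the now‑convergent geometric series over $i^\star$ (contributing a factor $(1-(1-\gamma)^2)^{-1}\leq\gamma^{-1}$) and then the binomial series over $w$ yields
\begin{align}
    \sum_{t\in\mathsf{P}^Z_n}\rho(t)^2\;\leq\;1+\frac1{1-(1-\gamma)^2}\Big[(1+3(1-\gamma)^{2d})^n-1\Big]\;\leq\;1+O\!\big(\gamma^{-1}n(1-\gamma)^{2d}\big),
\end{align}
which is $1+o(1)$ once $d\geq O(\gamma^{-1}\log n)$ forces $n(1-\gamma)^{2d}=o(1)$. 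Together with the trivial Cauchy–Schwarz bound $\sum_x p(x)^2\geq 2^{-n}$ this shows $\E_y\sum_x p_{\tilde{\mc C},y}(x)^2=\frac{O(1)}{2^n}$ (and that the estimate is tight).

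I expect the third step to be the main obstacle: the naive use of \cref{lemma:size_of_sw} costs a factor of $d=\Theta(\gamma^{-1}\log n)$, and removing it requires exploiting that a worst‑case input Pauli must stay at or above its minimum weight across \emph{all} layers — so that the layers strictly before the first minimum contribute extra suppression — which is what converts the spurious $d$ into a convergent geometric sum over $i^\star$. Everything else (the tensor‑power identity, the Pauli‑basis bookkeeping, and the $\E_y$ cancellation) is routine.
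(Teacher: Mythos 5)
Your proof is correct, arrives at the same reduction to $\frac{1}{2^n}\sum_{t\in\mathsf{P}^Z_n}\rho(t)^2$ as the paper's (the tensor‑power identity $\sum_x F_x\otimes F_x=2^{-n}\sum_S W_S\otimes W_S$ is a clean repackaging of the paper's step of bounding $\Tr(\mc U^\dagger(\ketbra{x})\mc C(s))^2\leq 1$ term by term), but it eliminates the spurious factor of $d$ by a genuinely different device. The paper does this more cheaply: after splitting off the identity term it writes $\rho(t)^2\leq(1-\gamma)^d\,\rho(t)$ for $t\neq\eye$ (any non‑identity Pauli has weight at least $1$ in every layer), then invokes \cref{lemma:num_surviving_paulis} verbatim — which costs the factor $d$ — but the extra $(1-\gamma)^d$ out front kills it at $d\gtrsim\gamma^{-1}\log n$, yielding $\frac{1}{2^n}\bigl[1+(1-\gamma)^d\,d\,e^{3(1-\gamma)^dn}\bigr]$. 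Your route instead retools the counting lemma itself: by charging each Pauli to the \emph{first} layer $i^\star$ attaining the minimum weight $w$ you get the additional suppression $(1-\gamma)^{2i^\star}$ from the layers strictly before $i^\star$ (each at weight $\geq w+1$), turning the $d$ of \cref{lemma:size_of_sw} into a convergent geometric sum $\leq\gamma^{-1}$. Both close at the same depth threshold; your $i^\star$ refinement is a sharper version of the Pauli‑path bookkeeping that the paper's proof does not need here but could in principle reuse, while the paper's trick is the shorter path given that \cref{lemma:num_surviving_paulis} is already available.
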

\begin{proof}
    We will make use of the following orthogonality property. For any $s,s' \in \hat{\mathsf{P}}^Z_n$, if $s \neq s'$, then
    \begin{align}
    \E_{y \sim \{0,1\}^n} \Tr(s \ketbra{y})\Tr(s' \ketbra{y}) = 0
    \end{align}
    Letting $v_z(\cdot)$ be the map to the binary symplectic representation where we have dropped the X-component, the above property can be seen as follows:
    \begin{align}
        \E_{y \sim \{0,1\}^n} \Tr(s \ketbra{y})\Tr(s' \ketbra{y}) &= \E_{y \sim \{0,1\}^n} -1^{v_z(s)\cdot y + v_z(s')\cdot y} \\
        &= \E_{y \sim \{0,1\}^n}-1^{(v_z(s)+ v_z(s')) \cdot y} \\
        &= 0 &\text{if $v_z(s)+ v_z(s') \neq 0$}
    \end{align}
    We will relate anticoncentration to the expected number of Pauli operators which `survive' the errors. Recall that given a configuration of errors $b$ in circuit $\tilde{C}_b$, we denote the group of propagated errors as $M_b$ and the corresponding centralizer (`surviving' Pauli operators) as $\mathsf C(\langle M_b\rangle)$. Using $U$ to denote the unitary which transforms the measurement basis into the computational basis, we have,
    \begin{align}
        \E_{y \sim \{0,1\}^n} \sum_{x\in\{0,1\}^n} p_{\tilde{\mc C},y}(x)^2 &=\E_{y \sim \{0,1\}^n} \sum_{x\in\{0,1\}^n} \Tr(\ketbra{x} \mc U \circ \tilde{\mc C} (\ketbra{y}))^2 \\
        &= \E_{y \sim \{0,1\}^n} \sum_{x\in\{0,1\}^n} \Tr(\ketbra{x} \E_{b} \left[\mc U \circ \mc C \circ \Pi_{M_b} (\ketbra{y})\right])^2 \\
        &= \E_{y \sim \{0,1\}^n} \sum_{x\in\{0,1\}^n} \left(\frac{1}{2^n}\sum_{s \in \hat{\mathsf{P}}_n^Z}  \E_{b}\left[ \Tr( \ketbra{x}\mc U \circ \mc C \circ \Pi_{M_b} (s)) \Tr(s \ketbra{y})\right]\right)^2 \\
        &= \frac{1}{2^{2n}} \E_{y \sim \{0,1\}^n} \sum_{x\in\{0,1\}^n} \left(\sum_{s \in \hat{\mathsf{P}}_n^Z} \mathbf{P}_b(\Pi_{M_b}(s) \neq 0) \Tr(\mc U^\dagger(\ketbra{x}) \mc C(s) ) \Tr(s \ketbra{y})\right)^2 \\
        &= \frac{1}{2^{2n}} \E_{y \sim \{0,1\}^n} \sum_{x\in\{0,1\}^n} \sum_{s\in \hat{\mathsf{P}}_n^Z} \mathbf{P}_b(\Pi_{M_b}(s) \neq 0)^2 \Tr(\mc U^\dagger(\ketbra{x}) \mc C(s) )^2 \Tr(s \ketbra{y}))^2 \label{eq:orthogonality}\\
        & \leq \frac{1}{2^{2n}}  \sum_{x\in\{0,1\}^n} \sum_{s\in \hat{\mathsf{P}}_n^Z}  \mathbf{P}_b(\Pi_{M_b}(s) \neq 0)^2 \label{eq:order-1} \\
         &= \frac{1}{2^n}\sum_{s\in \hat{\mathsf{P}}_n^Z}  \mathbf{P}_b(\Pi_{M_b}(s) \neq 0)^2\\
         &= \frac{1}{2^n}\left[1+ \sum_{s\in \{\hat{\mathsf{P}}_n^Z-I^{\otimes n}\}}  \mathbf{P}_b(\Pi_{M_b}(s) \neq 0)^2\right]\\
         &\leq \frac{1}{2^n}\left[1+ (1-\gamma)^{d}  \sum_{s\in \{\hat{\mathsf{P}}_n^Z-I^{\otimes n}\}}  \mathbf{P}_b(\Pi_{M_b}(s) \neq 0)\right] \label{eq:weight-at-least-one}\\
         &= \frac{1}{2^n}  \left[1+(1-\gamma)^{d} \E_b |\mathsf{C}(\langle M_b\rangle)|\right]\\
         &\leq \frac{1}{2^n} \left[1 +  (1-\gamma)^{d} de^{3(1-\gamma)^dn}\right] \label{eq:final}
    \end{align}
    where we have used orthogonality in \cref{eq:orthogonality}, the fact that Pauli observables on states are bounded by $1$ \cref{eq:order-1}, the fact that Pauli operators with weight at least $1$ survive with probability at most $(1-\gamma)^d$ in \cref{eq:weight-at-least-one}, and \cref{lemma:num_surviving_paulis} in \cref{eq:final}. Therefore, we achieve anticoncentration when $d > O(\gamma^{-1} \log n)$.
\end{proof}

\bibliography{doms_refs,bibliography}

\end{document}